  \newtheorem{theorem}{Theorem}
	\newtheorem{lemma}{Lemma}
    \newtheorem{definition}{Definition}
	\newtheorem{corollary}{Corollary}
	\def\be{\begin{equation}}
	  \def\ee{\end{equation}}
	  \def\ba{\begin{eqnarray}}
	  \def\ea{\end{eqnarray}}
\begin{document}
\date{\today}
\title{Nonclassical correlations for quantum metrology in thermal equilibrium}

\author{Akira Sone}
\affiliation{Research Laboratory of Electronics, Massachusetts Institute of Technology, Cambridge, MA 02139 }
\affiliation{Department of Nuclear Science and Engineering, Massachusetts Institute of Technology, Cambridge, MA 02139}

\author{Quntao Zhuang}
\affiliation{Research Laboratory of Electronics, Massachusetts Institute of Technology, Cambridge, MA 02139 }
\affiliation{Department of Physics, Massachusetts Institute of Technology, Cambridge, MA 02139}
\affiliation{Department of Physics, University of California Berkeley, Berkeley, California 94720, USA}

\author{Changhao Li}
\affiliation{Research Laboratory of Electronics, Massachusetts Institute of Technology, Cambridge, MA 02139 }
\affiliation{Department of Nuclear Science and Engineering, Massachusetts Institute of Technology, Cambridge, MA 02139}

\author{Yi-Xiang Liu}
\affiliation{Research Laboratory of Electronics, Massachusetts Institute of Technology, Cambridge, MA 02139 }
\affiliation{Department of Nuclear Science and Engineering, Massachusetts Institute of Technology, Cambridge, MA 02139}

\author{Paola Cappellaro}
\email{pcappell@mit.edu}
\affiliation{Research Laboratory of Electronics, Massachusetts Institute of Technology, Cambridge, MA 02139 }
\affiliation{Department of Nuclear Science and Engineering, Massachusetts Institute of Technology, Cambridge, MA 02139}

\begin{abstract}
Nonclassical correlations beyond entanglement might provide a resource in quantum information tasks, such as quantum computation or quantum metrology. 
Quantum discord is a measure of nonclassical correlations, to which entanglement belongs as a subset. Exploring the operational meaning of quantum discord as a resource in quantum information processing tasks, such as quantum metrology, is of essential importance to our understanding of nonclassical correlations. In our recent work [Phys. Rev. A, \textbf{98}, 012115 (2018)], we considered a protocol---which we call the greedy local thermometry protocol--- for estimating the temperature of thermal equilibrium states from local measurements, elucidating the role of diagonal discord   in enhancing  the protocol sensitivity in the high-temperature limit.  
In this paper, we extend our results to a general greedy local parameter estimation scenario. In particular, we introduce a quantum discord---which we call discord for local metrology---to quantify the nonclassical correlations induced by the local optimal measurement on the subsystem. We demonstrate explicitly that discord for local metrology plays a role in sensitivity enhancement in the high-temperature limit by showing its relation to loss in quantum Fisher information. In particular, it coincides with diagonal discord for estimating a linear coupling parameter. 
\end{abstract}

\maketitle


\section{Introduction}
Although the ability of entanglement to enhance quantum metrology has been well explored in ideal scenarios~\cite{giovannetti2006, giovannetti2011advances},  experimental constraints, such as noise, mixed states, and restriction to local measurements, usually make reaching the ultimate quantum limit impossible. In this context, a more general study of the role of {nonclassical correlations} in quantum metrology is critical as it can lead to more general measurement schemes, such as quantum illumination~\cite{Weedbrook16}, that take advantage of nonclassical properties~\cite{Tilma10}. 

{Nonclassical} correlations described by the quantum discord are of particular relevance as they quantify loss of information as a result of measuring a local subsystem~\cite{Girolami13, Olivier01} and can be applied to mixed states. The role of discordlike correlations has thus been recently studied in the context of parameter estimation~\cite{Braun18}, such as the geometric discord in phase estimation~\cite{Kim17}, quantum discord {in the global phase estimation with mixed states~\cite{Modi11, Modi16, Modi18a, Modi18b}}, in local phase estimation assisted by interferometry~\cite{Girolami14, Girolami15}, and  the diagonal discord in quantum thermometry~\cite{Sone18a}. 

Most of these works have analyzed the usual scenario for quantum parameter estimation, where a quantum (entangled) probe evolves under the action of an Hamiltonian that depends on the external parameter to be measured, before a measurement is performed on the final state~\cite{giovannetti2006, giovannetti2011advances, Degen16x}. 
{Although the optimal measurement does not require global measurements on the total system for schemes without entanglement~\cite{giovannetti2006},  for the entanglement-enhanced schemes described above, a global measurement is usually needed to achieve the optimal performance~\cite{Rafal14}.}

Since performing a global measurement is usually a demanding task, and one has to rely on local adaptive measurements, it is important to study whether this restriction degrades the achievable estimation performance in the case of nonclassical correlations more general than entanglement. To  better focus on this question, we consider a different metrology scenario where the parameter is not encoded during the evolution but in the equilibrium state. We show that, for a local detection protocol, nonclassical correlations in the state can be detrimental in contrast to the dynamic scenario where they help in the estimation. In particular, we consider a ``greedy" local measurement scheme~\cite{Sone18a} in which each subsystem is measured sequentially with a local optimal measurement for estimating a general parameter~(see Fig.~\ref{fig:idea2}). This protocol belongs to the class of local operations and classical communication (LOCC)~\cite{nielsen1999conditions}. In addition, we focus on  systems at thermal equilibrium in the Gibbs state and consider the  high-temperature limit, which is a practical scenario in various systems, such as a  room-temperature NMR system or biological system, and where only nonclassical correlations beyond entanglement are typically found.
Even in this regime, we find a precision loss when considering only local measurements, and we bound it by considering the discord present in the system.

Hamiltonian parameter estimation at thermal equilibrium has been considered before by Mehboudi \textit{et al.} ~\cite{Mehboudi16} in which they considered a special Hamiltonian consisting of two commuting operators to which temperature-independent parameters are linearly coupled. For this special case, they proved that the quantum Fisher information (QFI) for estimating either parameter can be characterized as a curvature of the Helmholtz free energy at an arbitrary temperature.  However, for a general Hamiltonian $H_\lambda$ parametrized by a temperature-independent parameter $\lambda$, this is not always the case because of the noncommutativity of the Hamiltonian and the generator of parameter $\lambda$. Still, in the high-temperature limit, the QFI can be well approximated by the susceptibility as discussed in Sec.~\ref{sec:globallocal}, and we can apply the relation provided in Ref.~\cite{Mehboudi16}.

\begin{figure}[hbt]
\subfigure{
\includegraphics[width=0.45\textwidth]{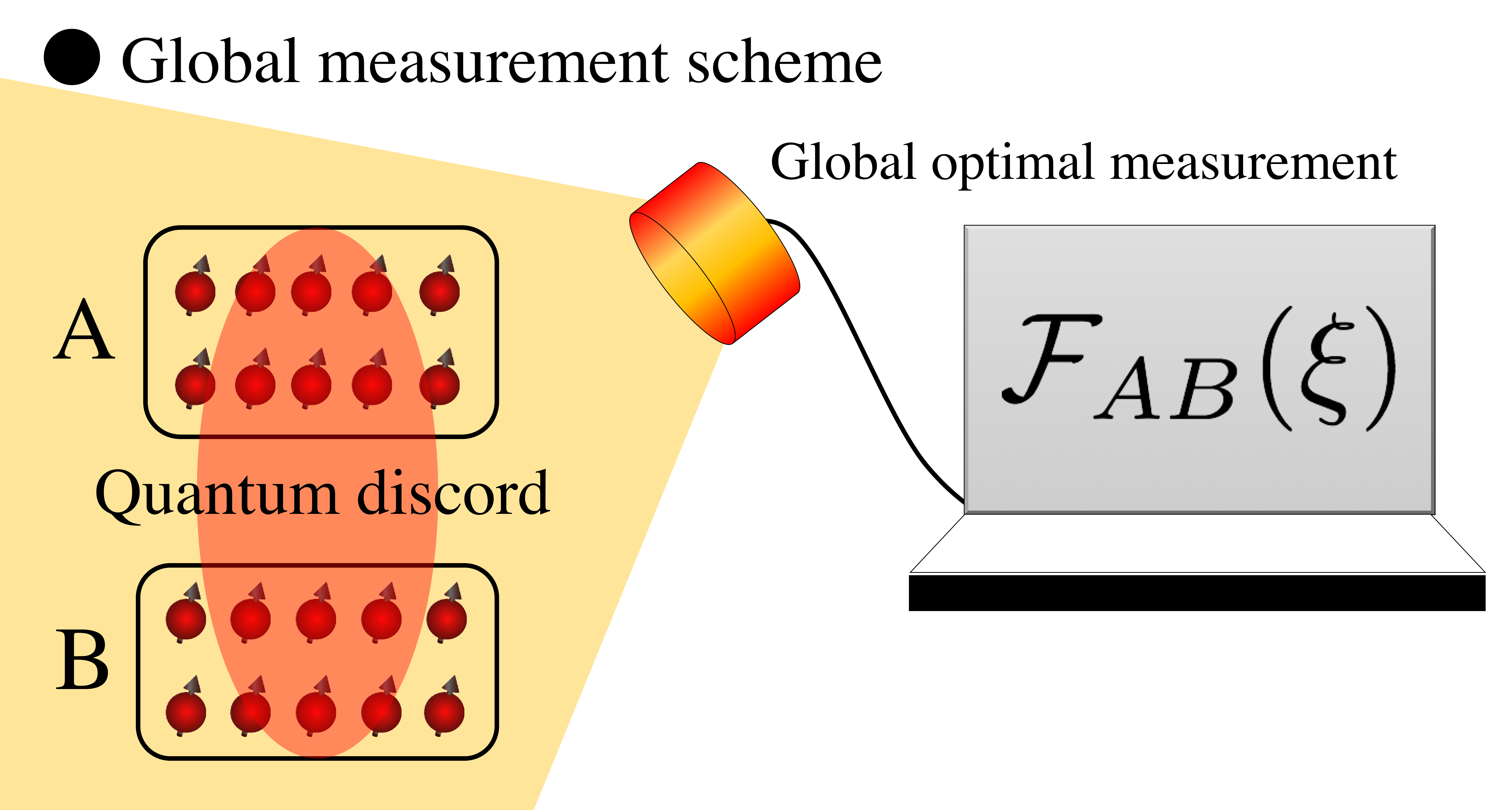}
}
\subfigure{
\includegraphics[width=0.45\textwidth]{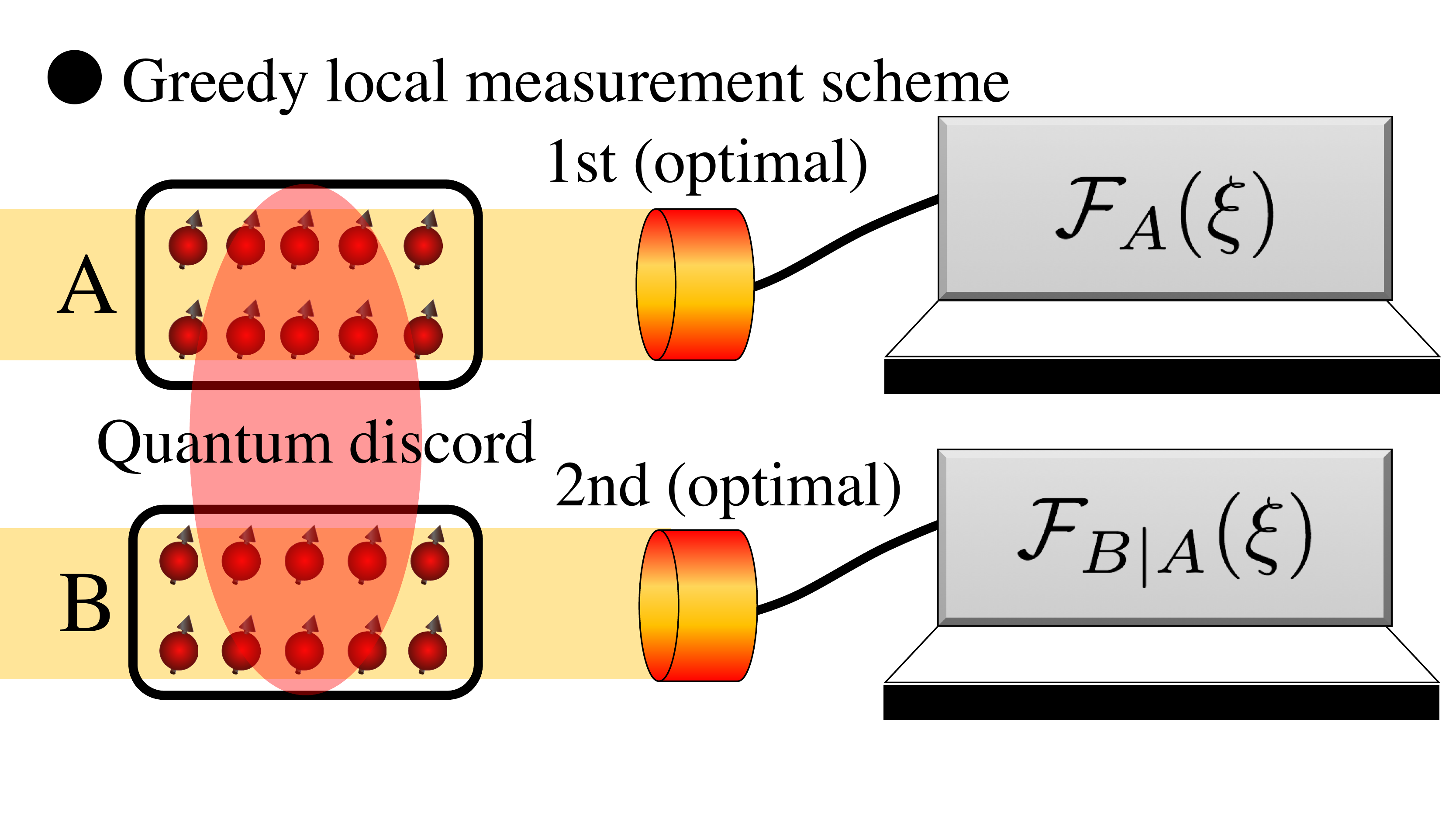}
}
\caption{Global measurement and greedy local measurement scheme: One first measures a subsystem $A$ with local optimal measurement in the sense of the local QFI and then measure the other subsystem $B$ in order to estimate an unknown parameter $\xi$. The constrained QFI is given as $\mathcal{F}_{A\to B}(\xi)=\mathcal{F}_{A}(\xi)+\mathcal{F}_{B|A}(\xi)$. We explore the relation between  the quantum discord $D_{A\to B}(\xi)$ and the precision loss $\Delta\mathcal{F}(\xi)=\mathcal{F}_{AB}(\xi)-\mathcal{F}_{A\to B}(\xi)$.
}
\label{fig:idea2}
\end{figure}

The paper is organized as follow. In Sec.~\ref{sec:QFI}, we review the QFI for estimating a single parameter and discuss the QFI in the global measurement scheme, namely, global QFI, in Sec.~\ref{sec:fisher}, and the constrained QFI in the greedy local measurement scheme, namely, LOCC QFI, in Sec.~\ref{sec:LOCC}. Based on the definition of quantum discord~\cite{Olivier01}, we introduce a quantum discord induced by local optimal measurements by considering the greedy local measurement scheme, namely, \textit{discord for local metrology} in Sec.~\ref{sec:qd}. Then, we show the relation between the discord for local metrology and precision loss quantified by the difference between global QFI and LOCC QFI at high temperatures in Sec.~\ref{sec:diagonal} and demonstrate that discord for local metrology coincides with diagonal discord when the parameter to be estimated is linearly coupled. 
Before concluding, we also provide examples to further illustrate our results.

\section{Global and greedy local measurement scheme}
\label{sec:globallocal}
We first review the definition of  QFI for estimating a single parameter, and discuss QFI for  global and  local measurement schemes. In particular, we devise an {optimal} measurement protocol that only exploits local measurements and define an associated QFI metric to evaluate its performance. 

\subsection{QFI for estimating a single parameter}
\label{sec:QFI}
The ultimate precision of parameter estimation is quantified by the QFI. 
Let $\xi$ be the parameter to be estimated, which could be a temperature independent parameter $\lambda$ in the Hamiltonian $H_\lambda$ or the temperature $T$ itself, i.e., $\xi\in\{\lambda, T\}$. 
Although often  $\xi$ is   estimated from a state $\rho_\xi$ that arises after interacting with the external field to be measured for a given time, here we consider a different scenario, where $\rho_\xi$ is an equilibrium state that is determined by the parameter-dependent Hamiltonian.
 The variance $(\delta\xi)^2$ quantifies the estimate precision. Its lower bound, which is the ultimate precision limit achievable, is  bounded by the quantum Cram\'{e}r-Rao bound $(\delta\xi)^2\ge1/\mathcal{F}(\xi,\rho_{\xi})$~\cite{Helstrom_1976, Holevo_1982, Yuen_1973}. 
Here, $\mathcal{F}(\xi,\rho_{\xi})$ is the QFI, defined as $\mathcal{F}(\xi,\rho_{\xi})=-2\lim_{\epsilon\to0}\partial^2\epsilon\mathbb{F}[\rho_\xi,\rho_{\xi+\epsilon}]$, where $\mathbb{F}[\rho,\sigma]$ denotes the fidelity between states $\rho$ and $\sigma$~\cite{Jorza93}.

\subsection{Global QFI}
\label{sec:fisher}
Consider a finite-dimensional system described by a Hamiltonian $H_\lambda$ parametrized by a single temperature-independent parameter $\lambda$ at temperature $T$. We assume the state to be in a Gibbs state, 
$\rho_\xi=e^{- H_\lambda/T}/\mathcal{Z}$, where we set the Boltzmann constant to be unit  $k_B=1$,
and $\mathcal{Z}=\text{Tr}[e^{-H_\lambda/T}]$ is the partition function. 

We first consider a global measurement scheme for a finite-dimensional system and derive the relation between the global QFI $\mathcal{F}(\xi,\rho_{\xi})$ and the entropy of the global system $S(\rho_\xi)$ in the high-temperature limit. We have obtained the following lemma. 
\begin{lemma}
\label{lemma1}
Consider a finite-dimensional system
in Gibbs state at temperature $T$, with its Hamiltonian parametrized by a temperature-independent parameter $\lambda$ to be estimated. 
Then, the global QFI for estimating $\lambda$ and the total system entropy, $S(\lambda;T)$ are related as
\begin{equation}
\partial_T\big(T\mathcal{F}(\lambda;T)\big)=\partial_\lambda^2 S(\lambda;T)+O(T^{-3}).
\label{eq:entropy2}
\end{equation}
\end{lemma}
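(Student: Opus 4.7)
The plan is to expand both sides of the identity as high-temperature asymptotic series in $\beta=1/T$ about the maximally mixed state $I/d$ and match at the $O(T^{-2})$ level, so that the claimed $O(T^{-3})$ residue remains. Throughout I write $\langle\cdot\rangle_0=\mathrm{Tr}(\cdot)/d$ for the infinite-temperature expectation, with associated variance $\mathrm{Var}_0$ and covariance $\mathrm{Cov}_0$.

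For the entropy side, I would Taylor-expand $\mathcal{Z}=\mathrm{Tr}(e^{-\beta H_\lambda})$ into the cumulant series $\ln\mathcal{Z}=\ln d-\beta\langle H_\lambda\rangle_0+\tfrac{\beta^{2}}{2}\mathrm{Var}_0(H_\lambda)+O(\beta^{3})$ and combine with $\langle H_\lambda\rangle=-\partial_\beta\ln\mathcal{Z}$ and the standard relation $S=\beta\langle H_\lambda\rangle+\ln\mathcal{Z}$ to obtain $S(\lambda;T)=\ln d-\tfrac{\beta^{2}}{2}\mathrm{Var}_0(H_\lambda)+O(\beta^{3})$; differentiating twice in $\lambda$ using the identity $\partial_\lambda^{2}\mathrm{Var}_0(H_\lambda)=2\mathrm{Var}_0(\partial_\lambda H_\lambda)+2\mathrm{Cov}_0(H_\lambda,\partial_\lambda^{2}H_\lambda)$ then yields $\partial_\lambda^{2}S$ at the required order. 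For the QFI side, I would expand $\rho_\xi=I/d-(\beta/d)(H_\lambda-\langle H_\lambda\rangle_0 I)+O(\beta^{2})$, solve the symmetric logarithmic derivative equation $\rho L+L\rho=2\partial_\lambda\rho$ perturbatively to get $L=-\beta(\partial_\lambda H_\lambda-\langle\partial_\lambda H_\lambda\rangle_0 I)+O(\beta^{2})$, and compute $\mathcal{F}(\lambda;T)=\mathrm{Tr}(\rho L^{2})=\beta^{2}\mathrm{Var}_0(\partial_\lambda H_\lambda)+O(\beta^{3})$, which is precisely the susceptibility representation invoked in Ref.~\cite{Mehboudi16}. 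Multiplying by $T$ and differentiating then gives $\partial_T(T\mathcal{F})=-\beta^{2}\mathrm{Var}_0(\partial_\lambda H_\lambda)+O(\beta^{3})$.

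Comparing the two, the identity holds immediately in the physically standard case of linear coupling ($\partial_\lambda^{2}H_\lambda=0$), where the $\mathrm{Cov}_0(H_\lambda,\partial_\lambda^{2}H_\lambda)$ term vanishes. In the general case one would invoke the thermodynamic Maxwell relation $\partial_\lambda S=-\partial_T\langle\partial_\lambda H_\lambda\rangle$ together with the free-energy identity $T\mathcal{F}+\langle\partial_\lambda^{2}H_\lambda\rangle=-\partial_\lambda^{2}F$, which holds in the commuting regime and is therefore inherited by the high-$T$ expansion at this order, so that the leftover covariance is absorbed by $\partial_T\langle\partial_\lambda^{2}H_\lambda\rangle=\beta^{2}\mathrm{Cov}_0(H_\lambda,\partial_\lambda^{2}H_\lambda)+O(\beta^{3})$ on the QFI side. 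The main obstacle is precisely this final reconciliation: one must verify that the noncommutator $[H_\lambda,\partial_\lambda H_\lambda]$ only affects $\mathcal{F}$ from $O(\beta^{4})$ onward, so that the susceptibility formula is accurate enough, and that the nonlinear-coupling contributions on the two sides cancel exactly at $O(T^{-2})$ rather than leaving a persistent discrepancy there.
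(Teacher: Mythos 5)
Your two asymptotic expansions are correct, and they are the right objects to compare: $\partial_T\bigl(T\mathcal{F}\bigr)=-\beta^{2}\mathrm{Var}_0(\partial_\lambda H_\lambda)+O(\beta^{3})$ from the SLD computation, and $\partial_\lambda^{2}S=-\beta^{2}\bigl[\mathrm{Var}_0(\partial_\lambda H_\lambda)+\mathrm{Cov}_0(H_\lambda,\partial_\lambda^{2}H_\lambda)\bigr]+O(\beta^{3})$ from the cumulant expansion. The gap is in the final ``reconciliation'' step, and it is a genuine one: the identity you invoke, $T\mathcal{F}+\langle\partial_\lambda^{2}H_\lambda\rangle=-\partial_\lambda^{2}F$, is the susceptibility relation, i.e.\ it holds for the classical Fisher information of the energy-basis measurement in a linear family, not for the QFI of the state when the parametrization is nonlinear; assuming it ``is inherited by the high-$T$ expansion at this order'' is assuming exactly what is to be proved. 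Worse, it contradicts your own SLD result: since $T\mathcal{F}=\beta\,\mathrm{Var}_0(\partial_\lambda H_\lambda)+O(\beta^{2})$ contains no $\langle\partial_\lambda^{2}H_\lambda\rangle$ contribution at the relevant order, and the $O(\beta^{3})$ part of $\mathcal{F}$ cannot feed back into $\partial_T(T\mathcal{F})$ at $O(\beta^{2})$, there is nothing on the QFI side to ``absorb'' the leftover term. Your expansion therefore proves the lemma only when $\mathrm{Cov}_0(H_\lambda,\partial_\lambda^{2}H_\lambda)=0$ (in particular for linear coupling, $\partial_\lambda^{2}H_\lambda=0$); for general nonlinear dependence the residual $\beta^{2}\mathrm{Cov}_0(H_\lambda,\partial_\lambda^{2}H_\lambda)$ persists at the advertised order, and the proposed patch does not close it.

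For comparison, the paper proceeds differently: it evaluates the Uhlmann fidelity between the two Gibbs states via a BCH-type manipulation, obtaining $\mathbb{F}[\rho_\lambda,\rho_{\lambda+\epsilon}]\simeq\mathcal{Z}_{\lambda+\epsilon/2}^{2}/(\mathcal{Z}_\lambda\mathcal{Z}_{\lambda+\epsilon})$, hence $\mathcal{F}(\lambda;T)\simeq\partial_\lambda^{2}\ln\mathcal{Z}_\lambda=\chi(\lambda;T)/T$, and then applies the exact Maxwell relation $\partial_T\langle G_\lambda\rangle=-\partial_\lambda S$, so that $\partial_T(T\mathcal{F})=\partial_T\chi+O(T^{-3})=\partial_\lambda^{2}S+O(T^{-3})$. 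Your calculation in fact makes explicit that the identification $\mathcal{F}=\chi/T+O(T^{-3})$ requires the $\partial_\lambda^{2}H_\lambda$ contributions (both $\mathrm{Tr}[\partial_\lambda^{2}H_\lambda]$ and $\mathrm{Cov}_0(H_\lambda,\partial_\lambda^{2}H_\lambda)$) to be negligible --- precisely the terms you could not cancel --- which is automatic in the linear-coupling setting the paper uses elsewhere (Lemma~2, the ``linear exponential family'' remark, Corollaries~3 and~4). So the obstruction you flagged is real rather than an artifact of your method, but as a proof of the statement as written your argument is incomplete at exactly that point: you must either add the linearity-type hypothesis under which the covariance term vanishes, or supply an argument (not the commuting-regime identity) showing it is $O(T^{-3})$.
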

{The full proof is in Appendix~\ref{app:lemma1}; here, we explain the basic idea of the proof. 
In the high-temperature limit, the QFI for estimating $\lambda$ can be quantified by the susceptibility $\chi(\lambda;T)$ to leading order: $\mathcal{F}(\lambda;T)=\chi(\lambda;T)/T+O(T^{-3})$. 
From the relation between the general susceptibility and entropy, $\partial_T \chi(\lambda;T)=\partial_\lambda^2 S(\lambda;T)$, we can obtain Eq.~(\ref{eq:entropy2}). 
Furthermore, let $A(\lambda;T)$ be the Helmholtz free energy. Then, from the relation between the Helmholtz free energy and entropy, $\partial_T A(\lambda;T)=-S(\lambda;T)$, we can obtain:
\begin{align*}
\mathcal{F}(\lambda;T)=-\frac{1}{T}\partial_\lambda^2 A(\lambda;T)+O(T^{-3}).
\end{align*}
This recovers the result of Ref.~\cite{Mehboudi16} to  leading order, which demonstrates that, in the high-temperature limit, the QFI can be characterized as the curvature of the Helmholtz free energy.} 
If the parameter to be estimated is the temperature $\xi=T$, the relation becomes \textit{exact}:
\begin{corollary}
For a system in the Gibbs state, we have 
\begin{equation}
\partial_T\big(T\mathcal{F}(\lambda;T)\big)=\partial_T^2 S(T).
\label{eq:entropy3}
\end{equation}
\end{corollary}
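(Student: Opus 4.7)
The plan is to bypass the high-temperature expansion altogether and compute both sides directly, exploiting the fact that in a Gibbs state the generator of temperature variation (the Hamiltonian itself) commutes with the state. This is precisely what is missing in the general Lemma~\ref{lemma1} setting, where $\partial_\lambda H_\lambda$ need not commute with $H_\lambda$, and it is the reason the corollary can be strengthened to an exact identity with no $O(T^{-3})$ remainder.

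The concrete steps would be as follows. First, compute the symmetric logarithmic derivative for the parameter $T$: since $[\rho_T, H_\lambda]=0$, one may write $\partial_T \rho_T = \frac{1}{T^2}(H_\lambda - \langle H_\lambda\rangle)\rho_T$ and, by commutativity, identify $L_T = \frac{1}{T^2}(H_\lambda-\langle H_\lambda\rangle)$ directly. Plugging into $\mathcal{F}(T) = \mathrm{Tr}[\rho_T L_T^2]$ yields $\mathcal{F}(T) = \mathrm{Var}(H_\lambda)/T^4$. Second, invoke the standard thermodynamic identity $\mathrm{Var}(H_\lambda) = T^2 C_V$, where $C_V \equiv \partial_T\langle H_\lambda\rangle$, to obtain $T\mathcal{F}(T) = C_V/T$. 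Third, use the Helmholtz free energy $A(\lambda;T)$ with $\partial_T A = -S$ and $\langle H_\lambda\rangle = A + T S$ to derive $C_V = T\,\partial_T S(T)$, so that $T\mathcal{F}(T) = \partial_T S(T)$. Differentiating once more in $T$ gives Eq.~(\ref{eq:entropy3}) exactly.

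There is no serious obstacle; each step is a textbook manipulation. The only care needed is the algebraic identity $\mathrm{Var}(H_\lambda) = -\partial_\beta\langle H_\lambda\rangle = T^2 C_V$, which follows from differentiating the partition function twice (or equivalently from the fluctuation--dissipation relation in its equilibrium form). It is worth emphasizing in the write-up that the exactness of the corollary—contrasted with the $O(T^{-3})$ remainder in Lemma~\ref{lemma1}—is a direct consequence of $[\rho_T, \partial_T \text{(generator)}]$ vanishing trivially for temperature estimation, a structural simplification unavailable for a generic Hamiltonian parameter $\lambda$.
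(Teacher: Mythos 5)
Your proof is correct and follows essentially the same route as the paper: both reduce the claim to the exact identities $\mathcal{F}(T)=C(T)/T^{2}$ and $C(T)=T\,\partial_{T}S(T)$, so that $T\mathcal{F}(T)=\partial_{T}S(T)$ and one differentiation gives the result. The only difference is that the paper cites the known relation $\mathcal{F}(T)=C(T)/T^{2}$ for Gibbs states, whereas you rederive it via the symmetric logarithmic derivative $L_{T}=\frac{1}{T^{2}}\left(H_{\lambda}-\langle H_{\lambda}\rangle\right)$, which makes the argument self-contained and correctly pinpoints the commutativity $[\rho_{T},H_{\lambda}]=0$ as the reason the relation is exact rather than holding only to leading order in $1/T$.
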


In the classical case, Eq.~(\ref{eq:entropy2}) becomes exact, as it can also be derived from  properties of the classical Fisher information in the linear exponential family~\cite{MacKay,CoverThomas,Amari}.

Let us define the optimal measurement in the high-temperature limit as the measurement which achieves the ultimate precision up to  order $O(T^{-2})$ of the QFI (for thermometry $O(T^{-4})$ of the QFI~\cite{Sone18a}).  
Different from  the thermometry case ($\xi=T$), 
whereas to estimate a generic parameter $\lambda$ the optimal measurement is generally not the projection measurement onto  energy eigenstates, this is instead the case for thermometry or if $\lambda$ is linearly coupled to the Hamiltonian.   
Formally, we have the following lemma (see Appendix~\ref{app:lemma2} for proof):
\begin{lemma}
\label{lemma2}
Consider a finite-dimensional system in the Gibbs state at temperature $T$ with its Hamiltonian parametrized by a temperature-independent parameter $\lambda$ to be estimated. If the Hamiltonian depends only linearly on $\lambda$, 
i.e., $\partial_\lambda^2 H_\lambda=0$,  projection measurements on the energy eigenstates are optimal to estimate $\lambda$.
\end{lemma}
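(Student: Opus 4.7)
The plan is to compute the classical Fisher information $F_c$ obtained from projection onto the (instantaneous) energy eigenbasis and show it matches the QFI expansion of Lemma~1 up to order $O(T^{-3})$. Since $\partial_\lambda^2 H_\lambda = 0$, define $V := \partial_\lambda H_\lambda$, which is $\lambda$-independent. The Hellmann--Feynman theorem then gives $\partial_\lambda E_i = V_{ii} \equiv \langle i|V|i\rangle$ for energy eigenvalues $\{E_i\}$ and eigenstates $\{|i\rangle\}$, both implicitly $\lambda$-dependent.

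Next I would differentiate $p_i = e^{-\beta E_i}/\mathcal{Z}$, using $\partial_\lambda \ln \mathcal{Z} = -\beta \langle V\rangle$, to obtain $\partial_\lambda p_i = -\beta p_i(V_{ii} - \langle V\rangle)$. The classical Fisher information from energy-basis projection is then
\begin{equation*}
F_c(\lambda;T) = \sum_i \frac{(\partial_\lambda p_i)^2}{p_i} = \beta^2 \sum_i p_i (V_{ii} - \langle V\rangle)^2,
\end{equation*}
i.e., $\beta^2$ times the variance of the diagonal part of $V$ in the Gibbs state. I would then invoke Lemma~1 to write $\mathcal{F}(\lambda;T) = -\partial_\lambda^2 A(\lambda;T)/T + O(T^{-3})$, and use the thermodynamic identities $A = -T\ln\mathcal{Z}$ and $\partial_\lambda A = \langle V\rangle$ to reduce the target to showing $-\partial_\lambda\langle V\rangle/T = F_c + O(T^{-3})$. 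Expanding $\partial_\lambda\langle V\rangle = \sum_i (\partial_\lambda p_i) V_{ii} + \sum_i p_i\, \partial_\lambda V_{ii}$ via the product rule, the first piece reproduces $F_c$ exactly, so the entire question reduces to controlling the second piece, which encodes the $\lambda$-dependence of the energy eigenstates.

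The main obstacle is to show that $\sum_i p_i\, \partial_\lambda V_{ii}$ is sufficiently subleading in $1/T$. Standard first-order perturbation theory gives $\partial_\lambda V_{ii} = 2\sum_{j \neq i} |V_{ij}|^2/(E_i - E_j)$; when weighted by the leading high-temperature occupation $p_i \approx 1/d$, the sum vanishes identically by the antisymmetry of $(E_i - E_j)^{-1}$ under $i \leftrightarrow j$. The first non-vanishing contribution arises from the subleading correction $p_i \approx 1/d - \beta(E_i - \bar{E})/d$, and a careful bookkeeping---leveraging the hypothesis $\partial_\lambda^2 H_\lambda = 0$, which rules out any additional term $-\beta\langle \partial_\lambda^2 H_\lambda\rangle$ that would otherwise contaminate the matching order---is needed to confirm that this residue contributes only at $O(T^{-3})$ after division by $T$. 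Once this cancellation is established order by order in the high-temperature expansion, $\mathcal{F} - F_c = O(T^{-3})$ follows, so projection onto the energy eigenbasis saturates the QFI in the sense defined above.
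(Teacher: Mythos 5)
Your setup is fine as far as it goes: the exact expressions $\partial_\lambda p_i=-\beta p_i(V_{ii}-\langle V\rangle)$ and $F_c=\beta^2\sum_i p_i(V_{ii}-\langle V\rangle)^2$, the reduction to $-\partial_\lambda\langle V\rangle/T$, and the identification of the first piece of $\partial_\lambda\langle V\rangle$ with $-T F_c$ are all correct. But the step you defer---showing that $\sum_i p_i\,\partial_\lambda V_{ii}$ contributes only at $O(T^{-3})$ after division by $T$---is precisely where the argument fails, and it fails structurally, not for lack of bookkeeping. Inserting $p_i=\frac{1}{d}\bigl(1-\beta(E_i-\bar{E})\bigr)+O(\beta^2)$ and $\partial_\lambda V_{ii}=2\sum_{j\neq i}|V_{ij}|^2/(E_i-E_j)$ and symmetrizing over $i\leftrightarrow j$ gives
\begin{equation*}
\sum_i p_i\,\partial_\lambda V_{ii}=-\frac{\beta}{d}\sum_{i\neq j}|V_{ij}|^2+O(\beta^2),
\end{equation*}
because the $O(\beta^0)$ part cancels by antisymmetry as you say, but the $O(\beta)$ part does not cancel: the factor $(E_i-\bar{E})-(E_j-\bar{E})$ produced by symmetrization exactly cancels the energy denominator. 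Hence this term is $O(T^{-1})$, and after dividing by $T$ one finds $\mathcal{F}(\lambda;T)=F_c(\lambda;T)+\frac{\beta^2}{d}\sum_{i\neq j}|V_{ij}|^2+O(T^{-3})$: the mismatch sits at the same order $O(T^{-2})$ as the leading term you are trying to match, and it vanishes only when $[\partial_\lambda H_\lambda,H_\lambda]=0$, i.e., when $V$ is already diagonal in the energy eigenbasis. This extra piece is not an artifact of the expansion; it is exactly the coherent part of the QFI, $2\sum_{k\neq l}\frac{(p_k-p_l)^2}{p_k+p_l}\frac{|V_{kl}|^2}{(E_k-E_l)^2}\simeq\frac{\beta^2}{d}\sum_{k\neq l}|V_{kl}|^2$, generated by the $\lambda$-dependent rotation of the energy eigenstates, which a projective measurement in that basis cannot sense. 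So the route ``show $F_c=\mathcal{F}+O(T^{-3})$'' cannot be completed from the lemma's hypothesis alone, and your proposal as written does not establish the statement (it would do so only in the commuting case, e.g., thermometry or $[G_\lambda,H_\lambda]=0$).

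It is also worth noting that this is a genuinely different route from the paper's. The paper never computes the classical Fisher information of the energy-basis measurement: its argument is the Hellmann--Feynman-type identity $\langle G_\lambda\rangle=\langle P_\lambda\partial_\lambda K_\lambda P_\lambda^\dagger\rangle$ (your $\sum_i p_i V_{ii}$), showing the signal is carried by an observable diagonal in the energy eigenbasis, combined with the error-propagation chain $(\delta\lambda)^2=(\delta G_\lambda)^2/\bigl(\partial_\lambda\langle G_\lambda\rangle\bigr)^2=T^2/(\delta G_\lambda)^2\approx1/\mathcal{F}(\lambda;T)$, in which the variance of the full observable $G_\lambda$ appears; optimality of the energy-basis projection is asserted in that signal-to-noise sense. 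Your calculation, if pushed honestly to the end, makes explicit the distinction between $(\delta G_\lambda)^2$ and the variance of its diagonal part (which is what $F_c$ sees), a distinction the paper's argument does not resolve---so the deferred ``careful bookkeeping'' is not a routine verification but the crux, and it comes out the opposite way from what your plan assumes.
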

\begin{corollary}
Since the temperature multiplies the Hamiltonian in the Gibbs state, projection on the energy eigenstates is also optimal for thermometry.
\end{corollary}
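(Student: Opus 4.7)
The plan is to reduce thermometry to the hypothesis of Lemma~\ref{lemma2} by a reparametrization trick: since $T$ appears only through the combination $H/T$ in the Gibbs state, the inverse temperature $\beta = 1/T$ plays exactly the role of a linear coupling constant. First I would rewrite
\[
\rho_T = \frac{e^{-H/T}}{\mathcal{Z}(T)} = \frac{e^{-\beta H}}{\mathcal{Z}(\beta)},
\]
and view this as a Gibbs state of an effective Hamiltonian $\tilde H_\beta = \beta H$ (at an arbitrary fixed reference temperature, which can be absorbed into $\mathcal{Z}$). By construction $\partial_\beta^2 \tilde H_\beta = 0$, so $\tilde H_\beta$ depends linearly on the parameter $\beta$, and Lemma~\ref{lemma2} applies directly with $\lambda \leftarrow \beta$: projection measurements on the energy eigenstates of $\tilde H_\beta$, which coincide with those of $H$, are optimal for estimating $\beta$ in the high-temperature (i.e., small-$\beta$) regime.

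Next I would promote this statement from $\beta$ to $T$ by invoking the covariance of the optimal POVM under smooth invertible reparametrizations of a single parameter. Specifically, the SLD satisfies $L_T = (\partial\beta/\partial T)\, L_\beta = -T^{-2} L_\beta$, so $L_T$ is a nonzero scalar multiple of $L_\beta$ on the support of $\rho$; hence both SLDs share the same spectral projectors, and the QFI transforms by the Jacobian squared $\mathcal{F}(T) = T^{-4}\mathcal{F}(\beta)$. In particular, the projection measurement on the energy eigenstates that saturates the Cram\'er--Rao bound for $\beta$ (to the advertised order) also saturates it for $T$, which is the claim of the corollary.

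The calculation is essentially free of obstacles—the main subtlety is ensuring that the high-temperature ``optimality up to $O(T^{-2})$'' of Lemma~\ref{lemma2}, translated via $\beta \to 0$, matches the order $O(T^{-4})$ declared for thermometry in the paragraph preceding Lemma~\ref{lemma2}. This is consistent because the extra factors of $T^{-2}$ coming from the Jacobian $\partial\beta/\partial T$ in the QFI rescale the small-$\beta$ expansion of the residual terms into the corresponding small-$1/T$ expansion, so no loss of precision order occurs under the reparametrization. With this observation the proof is complete, modulo a direct sanity check that the SLD $L_T = (H - \langle H\rangle)/T^2$ computed from $\partial_T \rho_T = T^{-2}(H - \langle H\rangle)\rho_T$ is manifestly diagonal in the energy basis, which can be added as a one-line verification.
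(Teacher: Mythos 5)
Your proposal is correct, but it takes a different route from the paper. The paper's own argument (second part of Appendix~B) is direct and \emph{exact}: for a Gibbs state the thermometric QFI is $\mathcal{F}(T)=C(T)/T^{2}$ with $C(T)=\partial_T\langle H\rangle=(\delta H)^{2}/T^{2}$, and error propagation for a measurement of $H$ itself gives $(\delta T)^{2}=(\delta H)^{2}/(\partial_T\langle H\rangle)^{2}=T^{2}/C(T)=1/\mathcal{F}(T)$, so projecting onto the energy eigenbasis saturates the Cram\'er--Rao bound at \emph{any} finite temperature, with no high-temperature expansion. You instead reduce to Lemma~2 by the reparametrization $\beta=1/T$, viewing $\tilde H_\beta=\beta H$ as linearly coupled, and then transport optimality back to $T$ via SLD covariance, $L_T=(\partial\beta/\partial T)L_\beta$. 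This works, but note two caveats: (i) the ``high-temperature limit'' of Lemma~2 is an expansion in $\|H_\lambda\|/T$, and after fixing the reference temperature it becomes an expansion in $\beta\|H\|$, so what you inherit is only leading-order (high-$T$) optimality, weaker than the paper's exact statement; (ii) your closing ``sanity check,'' $\partial_T\rho_T=T^{-2}\bigl(H-\langle H\rangle\bigr)\rho_T$ hence $L_T=(H-\langle H\rangle)/T^{2}$ commuting with $\rho_T$ and diagonal in the energy basis, is by itself a complete, exact, and arguably cleaner proof of the corollary --- it makes the reparametrization detour unnecessary and matches the strength of the paper's heat-capacity argument.
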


Here, we note that, for a generic Hamiltonian $H_\lambda$, the susceptibility with respect to $\lambda$ is given by
\begin{align*}
\begin{split}
    \chi(\lambda;T)&=\frac{\langle G_\lambda^2\rangle-\langle G_\lambda\rangle^2}{T}-\langle \partial_\lambda G_\lambda\rangle\\
    &=\frac{(\delta G_\lambda)^2}{T}-\langle \partial_\lambda G_\lambda\rangle,
\end{split}
\end{align*}
where $G_\lambda=\partial_\lambda H_\lambda$. From Eq.~(\ref{eq:QFI}), the QFI becomes:
\begin{equation}
    \mathcal{F}(\lambda;T)=
    \frac{(\delta G_\lambda)^2}{T^2}-\frac{\langle \partial_\lambda G_\lambda\rangle}{T}+O(T^{-3}).
\label{eq:fisherG}
\end{equation}
If $\lambda$ is linearly coupled to the Hamiltonian, i.e., $\partial_\lambda G_\lambda=\partial_\lambda^2 H_\lambda=0$, the projection measurements on the energy eigenstate are optimal since measuring $G_\lambda$ corresponds to  projection measurements on the energy eigenstates 
and  the sensitivity of measuring $G_\lambda$  saturates the Fisher information as follows
\begin{equation}
(\delta \lambda)^2=\frac{(\delta G_\lambda)^2}{\left(\partial_\lambda \braket{G_\lambda}\right)^2}=\frac{(\delta G_\lambda)^2}{\chi(\lambda;T)^2}=\frac{T^2}{(\delta G_\lambda)^2}\approx\frac{1}{\mathcal{F}(\lambda;T)}.
\label{eq:diagonaloptimal}
\end{equation}
For a general parameter $\lambda$, we usually have $\langle \partial_\lambda G_\lambda\rangle\neq 0$, and from Eq.~(\ref{eq:fisherG}), the projection measurements on the energy eigenstate are not optimal. However, there still exists a set of observables that  achieves the optimal measurement.

\subsection{LOCC QFI}
\label{sec:LOCC}
Global measurements on a composite system are generally required to achieve the optimal QFI, but are usually difficult to implement. If only local measurements are available, even the best measurement protocol might not reach optimality. 
Here, we consider a  local measurement scheme with sequential local optimal measurements on subsystems that we call ``greedy'' local measurement scheme~\cite{Sone18a}. 
This scheme belongs to the class of LOCC, thus we call the constrained QFI of this scheme LOCC QFI. 

Consider an arbitrary bipartite system in state $\rho^{AB}_\xi$. In the greedy local measurement scheme, we first perform a local optimal {projection} measurement $\tilde{\Pi}_j^A$  on the first subsystem, {where we use the notation $\tilde{\Pi}$ in order to emphasize that the measurement is optimal}. 
After the measurement,  the state of subsystem $B$ is a conditional state based on the measurement result of $\tilde{\Pi}_j^A$,
 $\rho^{B|\tilde{\Pi}_j^A}_\xi=\text{Tr}_A[(\tilde{\Pi}_j^A\otimes\openone^B)\rho^{AB}_\xi(\tilde{\Pi}_j^{A\dagger}\otimes\openone^B)]/p_j(\xi)$, with  $p_j(\xi)=\text{Tr}[(\tilde{\Pi}_j^A\otimes\openone^B)\rho^{AB}_\xi(\tilde{\Pi}_j^{A\dagger}\otimes\openone^B)]$ the measurement probability. Given the conditional QFI for outcome $j$, $\mathcal{F}_{B|\tilde{\Pi}_j^A}(\xi)=\mathcal{F}(\xi,\rho^{B|\tilde{\Pi}_j^A}_\xi)$, the unconditional local QFI for subsystem $B$ is given by
\begin{align*}
    \mathcal{F}_{B|A}(\xi)=\sum_j p_j(\xi)\mathcal{F}_{B|\tilde{\Pi}_j^A}(\xi),
\end{align*}
Note that  feed forward is required  as the optimal measurement on $B$ depends on the outcome of $\tilde{\Pi}_j^A$. From the additivity of the Fisher information,  the LOCC QFI $\mathcal{F}_{A\to B}(\xi)$ is given by
\begin{align*}
    \mathcal{F}_{A\to B}(\xi)=\mathcal{F}_A(\xi)+\mathcal{F}_{B|A}(\xi),
\end{align*}
where $\mathcal{F}_A(\xi)=\mathcal{F}(\xi,\rho^A_{\xi})$ is the local QFI for subsystem $A$~\cite{Lu12, Micadei15, Sone18a}. {Note that LOCC QFI has been originally proposed by Ref.~\cite{Lu12} from an information-theoretic perspective and by Ref.~\cite{Micadei15} from a quantum metrology perspective.} By definition, the global QFI $\mathcal{F}_{AB}(\xi)$ always satisfies $\mathcal{F}_{AB}(\xi)\ge \mathcal{F}_{A\to B}(\xi)$~\cite{Micadei15,Sone18a}. Here, we are interested in relating the precision loss, 
\begin{align*}
 \Delta\mathcal{F}(\xi)=\mathcal{F}_{AB}(\xi)-\mathcal{F}_{A\to B}(\xi),
\end{align*}
  due to local measurements to the presence of nonclassical correlations in $\rho^{AB}_\xi$.

\section{Discord for local metrology}
\label{sec:qd}




Nonclassical correlations associated with the loss of quantum certainty in local measurements have been quantified by quantum discord~\cite{Olivier01,Girolami13,Henderson01}.
For a bipartite system $(AB)$, the quantum discord~\cite{Olivier01} upon measuring  subsystem $A$  is defined as
\begin{align*}
    D_{A\to B}=-S_{AB}+S_{A}+\min_{\{\Pi_j^{A}\}}S_{B|\{\Pi_j^A\}},
\end{align*}
where $\{\Pi_j^A\}$'s are the set of projection measurements on subsystem $A$, and $S_i=-\text{Tr}[\rho_i \ln\rho_i]$ is the entropy of state $\rho_i$. {Here, $S_{B|\{\Pi_j^A\}}$ is defined as 
\begin{align*}
    S_{B|\{\Pi_j^A\}}=\sum_{j}p_j S_{B|\Pi_j^A},
\end{align*}
with $p_j=\text{Tr}[(\Pi_j^A\otimes\openone_B)\rho_{AB}(\Pi_j^{A\dagger}\otimes\openone_B))]$ the probability associated with the projection measurement $\Pi_j^A$.} The minimization over all sets of  projection measurements on subsystem $A$ is required in order for quantum discord to be basis independent and for extracting maximum information about subsystem $B$. 

In order to connect nonclassical correlations to the precision loss in metrology, we need to define a related metric, that we call  \textit{discord for local metrology} where the minimization is restricted to projectors achieving optimal estimate of $\xi$:
 
\begin{definition}
\label{definition}
Let $\{\tilde{\Pi}_j^A \}$ be a set of optimal projection measurements on subsystem $A$ so that there exists an observable $\tilde{\Gamma}^A=\sum_j c_j \tilde{\Pi}_j^A~(c_j\in\mathbb{C})$, which can achieve the ultimate precision of estimating $\xi$, i.e.,
\begin{align*}
    (\delta\xi)^2=\frac{(\delta \tilde{\Gamma}^A)^2}{\big(\partial_\xi\langle \tilde{\Gamma}^A\rangle\big)^2}=\frac{1}{\mathcal{F}_A(\xi)},
\end{align*}
where $\mathcal{F}_A(\xi)$ is the local QFI for estimating $\xi$ from  $\rho^A_{\xi}$. 
Then, \textit{discord for local metrology} $\tilde{D}_{A\to B}(\xi)$ is defined as
\begin{align*}
    \tilde{D}_{A\to B}(\xi)=-S_{AB}(\xi)+S_{A}(\xi)+\min_{\{\tilde{\Pi}_j^{A}\}}S_{B|\{\tilde{\Pi}_j^{A}\}}(\xi),
\end{align*}
which is minimized over all the possible sets of  projection measurements that are optimal for estimating the parameter $\xi$.
\end{definition}
{The minimization indicates that discord for local metrology is independent of the choice of the optimum basis for estimating $\xi$.} Because the measurement basis is chosen according to the optimal parameter estimation, discord for local metrology is an upper bound of the discord, i.e., $\tilde{D}_{A\to B}(\xi)\ge D_{A\to B}$. Also, the minimization is required to avoid the ambiguity when multiple projection bases are optimal. 
{Note that the discord for local metrology is a function of a state and a parameter; therefore, it is not a typical correlation measure for the state. Discord for local metrology has the following properties:
\begin{enumerate}
\item{$\tilde{D}_{A\to B}\ge 0$~(nonnegative);}
\item{$\tilde{D}_{A\to B}\neq \tilde{D}_{B\to A}$ (asymmetric);}
\item{If the total system is in the product state, i.e., $\rho_{AB}=\rho_A\otimes\rho_B$, then $\tilde{D}_{A\to B}=0$;
If $\tilde{D}_{A\to B}=0$, then the total system is in a classical-quantum state, i.e., $\rho_{AB}=\sum_j p_j\ket{j}\bra{j}\otimes \rho_{j,B}$, for some set of orthonormal basis vectors $\{\ket{j}\}$, probability distribution $\{p_j\}$ and states $\{\rho_{j,B}\}$.
}
\item{$\tilde{D}_{A\to B}$ is invariant under local unitary operations.}
\end{enumerate}
Properties (1) and (2) are trivial. The first half of Property (3) is straightforward, and the second part follows from the fact that $\tilde{D}_{A\to B}(\xi)\ge D_{A\to B}$; thus $\tilde{D}_{A\to B}(\xi)=0$ leads to zero discord, and the state must be classical quantum.
Property (4) is due to the state dependence of  the local measurement basis, which makes the quantity only a function of the state and parameter choice. Local unitary operations change the state, but the optimal basis also changes accordingly, thus leaving invariant the discord for local metrology. Note that one does not expect invariance under more general local operations, since discord can increase under local noise~\cite{Streltsov11}.
} Property (4) distinguishes our metric from the family of basis-dependent discord \cite{Yadin16, ZhouBDdiscord18, Ma16} with which it otherwise shares many commonalities. 


{Since discord for local metrology satisfies the conditions of non-negativity and invariance under local unitary operations, we can regard it as a \textit{good} measure of correlations~\cite{LiuThesis18}. 
Although it can be nonzero for some specific classical-quantum state, an unpleasant property for a discord metric,  it is a practical quantity to measure  correlations in terms of local optimal measurement for metrology. 
}

For a general bipartite system, it is a demanding task to find $\{\tilde{\Pi}_j^A\}$. However, when $\xi=T$ or $\xi=\lambda$ is a linear coupling parameter for a Gibbs state in the high-temperature limit, $\tilde{\Pi}_j^A$ becomes the eigenbasis of $\rho_A$, i.e., $\rho_A=\sum_{j}r_j\tilde{\Pi}_j^{A}$, as shown in Sec.~\ref{sec:fisher}. Therefore, $\tilde{D}_{A\to B}(\xi)$ becomes the so-called diagonal discord $\mathcal{D}_{A\to B}(\xi)$~\cite{Liu17}.

\section{Quantifying $\Delta\mathcal{F}(\xi; T)$ via  $\tilde{D}_{A\to B}(\xi; T)$}
\label{sec:diagonal}
In this section, we prove our main result, Theorem~\ref{theorem}, stating the relation between the discord for local metrology and the precision loss quantified by the difference between global QFI and LOCC QFI. 

\begin{theorem}
\label{theorem}
Consider a finite-dimensional system in a Gibbs state with its Hamiltonian $H_\lambda$ parametrized by a temperature-independent parameter $\lambda$ at temperature $T$. Let $\xi\in\{\lambda, T\}$ denote an unknown parameter to be estimated. If $\mathcal{F}_{AB}(\xi;T)$ is the global QFI and $\mathcal{F}_{A\to B}(\xi;T)$ is the LOCC QFI for estimating $\xi$, in the high-temperature limit, we have
\begin{equation}
    -\partial_\xi^2\tilde{D}_{A\to B}(\xi;T)=\partial_T\Big(T\Delta\mathcal{F}(\xi;T)\Big)+O(T^{-\alpha_\xi}),
\label{eq:mainresult}
\end{equation}
where $\alpha_\lambda=3$ and $\alpha_T=5$. Particularly, for thermometry ($\xi=T$), $\tilde{D}_{A\to B}(T)$ becomes the diagonal discord $\mathcal{D}_{A\to B}(T)$, which obeys
\begin{equation}
    -\partial_T^2\mathcal{D}_{A\to B}(T)=\partial_T\Big(T\Delta\mathcal{F}(T)\Big)+O(T^{-5}),
\label{eq:temperature}
\end{equation}
\end{theorem}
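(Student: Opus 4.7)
The plan is to decompose the left-hand side using the definition of discord for local metrology and then translate each resulting entropy into a QFI via Lemma~\ref{lemma1} or an extension of it. Starting from
\begin{align*}
-\partial_\xi^2\tilde{D}_{A\to B}(\xi;T) &= \partial_\xi^2 S_{AB}(\xi;T) - \partial_\xi^2 S_{A}(\xi;T) \\
&\quad - \partial_\xi^2\min_{\{\tilde{\Pi}_j^A\}} S_{B|\{\tilde{\Pi}_j^A\}}(\xi;T),
\end{align*}
the goal is to produce $\partial_T\bigl(T[\mathcal{F}_{AB}-\mathcal{F}_A-\mathcal{F}_{B|A}]\bigr)$ on the right-hand side, which by the additivity $\mathcal{F}_{A\to B} = \mathcal{F}_A + \mathcal{F}_{B|A}$ established in Sec.~\ref{sec:LOCC} is exactly $\partial_T(T\,\Delta\mathcal{F}(\xi;T))$.

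The first term is handled by Lemma~\ref{lemma1} applied directly to the Gibbs state, giving $\partial_\xi^2 S_{AB} = \partial_T(T\mathcal{F}_{AB}) + O(T^{-3})$, and exactly for $\xi=T$. For the reduced state $\rho_A = \text{Tr}_B[\rho^{AB}_\xi]$, although $\rho_A$ is not literally a Gibbs state of any subsystem Hamiltonian, its high-temperature expansion admits an effective Gibbs form $\rho_A \propto \exp(-H_A^{\mathrm{eff}}/T)$ with $H_A^{\mathrm{eff}} = \text{Tr}_B[H_\lambda]/d_B + O(1/T)$. Rerunning the susceptibility expansion that underlies Lemma~\ref{lemma1} on this effective state reproduces the identity $\partial_\xi^2 S_A = \partial_T(T\mathcal{F}_A) + O(T^{-3})$ to the required order.

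The principal obstacle is the conditional entropy term. The plan is to prove, for each outcome $j$, a Lemma~\ref{lemma1}-analog $\partial_\xi^2 S_{B|\tilde{\Pi}_j^A} = \partial_T\bigl(T\mathcal{F}_{B|\tilde{\Pi}_j^A}\bigr) + O(T^{-3})$ and then weight-sum over $j$ with probabilities $p_j(\xi)$. At high temperature the post-measurement state $\rho^{B|\tilde{\Pi}_j^A}_\xi = \openone_B/d_B + O(1/T)$ again admits its own effective Gibbs form, now with a conditional Hamiltonian determined by $H_\lambda$ and the local projector $\tilde{\Pi}_j^A$, so the susceptibility derivation carries through. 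Two subtleties must be handled: (i) the probabilities $p_j(\xi) = 1/d_A + O(1/T)$ depend on $\xi$, but their $\xi$-derivatives contribute only at higher orders in $1/T$ and do not contaminate the leading identity; (ii) the minimizing basis $\{\tilde{\Pi}_j^A\}$ itself depends on $\xi$ and $T$, but an envelope-type argument allows one to differentiate past the minimization without generating additional leading-order terms.

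Finally, the order of the remainder in Eq.~(\ref{eq:mainresult}) follows by careful bookkeeping of the expansions. For a general parameter $\lambda$ the leading corrections in Lemma~\ref{lemma1} are $O(T^{-3})$, which gives $\alpha_\lambda = 3$. For $\xi=T$ the Corollary to Lemma~\ref{lemma1} makes the global identity exact, and the structure of the Gibbs state (in which $T$ multiplies $H_\lambda$ uniformly) forces the residual $O(T^{-3})$ errors from the subsystem and conditional pieces to cancel at the next order, sharpening the remainder to $O(T^{-5})$. In this case the optimal local measurements coincide with the eigenbasis of $\rho_A$ by Lemma~\ref{lemma2} and its corollary, so $\tilde{D}_{A\to B}(T)$ collapses to the diagonal discord $\mathcal{D}_{A\to B}(T)$, which gives Eq.~(\ref{eq:temperature}).
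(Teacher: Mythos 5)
Your plan for the $\xi=\lambda$ case is essentially the paper's own route: high-temperature expansion of the Gibbs state, effective Gibbs forms for $\rho_A$ and for the post-measurement conditional states $\rho^{B|\tilde{\Pi}_j^A}$, application of Lemma~\ref{lemma1} to each piece, and order counting of the terms generated by the $\xi$-dependence of $p_j$. One caution on your subtlety (i): the assertion that the $\xi$-derivatives of the probabilities "contribute only at higher orders" is not true term by term. Since $S_{B|\tilde{\Pi}_j^A}=\ln d_B+O(T^{-2})$ and $\partial_\lambda^2 p_j=O(T^{-1})$, the single term $\partial_\lambda^2 p_j\, S_{B|\tilde{\Pi}_j^A}$ is naively $O(T^{-1})$, which would dominate the $O(T^{-2})$ identity you are trying to prove. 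What saves the argument (and what the paper uses) is the normalization sum rule $\sum_j p_j=1$, so that $\sum_j\partial_\lambda^2 p_j=0$ kills the $\ln d_B$ piece and leaves $O(T^{-1})\cdot O(T^{-2})=O(T^{-3})$; the cross term $2\partial_\lambda p_j\,\partial_\lambda S_{B|\tilde{\Pi}_j^A}$ is $O(T^{-3})$ without cancellation. Spell this out, otherwise the estimate does not close.

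The thermometry half is where your proposal has a genuine gap. For $\xi=T$ both sides of Eq.~(\ref{eq:temperature}) are themselves $O(T^{-4})$ (the global QFI is $\mathcal{F}(T)=C(T)/T^2$ with $C(T)=O(T^{-2})$, and likewise $\Delta\mathcal{F}(T)=O(T^{-4})$), so an error bound of $O(T^{-3})$ — which is all the Lemma~\ref{lemma1}-based machinery you rerun delivers — makes the statement vacuous rather than sharp. Your claim that "the structure of the Gibbs state forces the residual $O(T^{-3})$ errors to cancel at the next order" is an assertion, not a derivation: establishing the $O(T^{-5})$ remainder requires carrying the high-temperature expansions of the diagonal discord and of the local and conditional QFIs at least one order deeper and verifying the cancellations explicitly. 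The paper does not redo this; it invokes the previously proven relation $-\tfrac{1}{T}\partial_T\mathcal{D}_{A\to B}(T)=\Delta\mathcal{F}(T)+O(T^{-5})$ from Ref.~\cite{Sone18a}, together with Lemma~\ref{lemma2} to identify $\tilde{D}_{A\to B}(T)$ with the diagonal discord, and then differentiates. Your identification of the optimal basis with the eigenbasis of $\rho_A$ (hence $\tilde{D}_{A\to B}(T)=\mathcal{D}_{A\to B}(T)$) is fine, but the $O(T^{-5})$ error order must either be imported from that earlier result or be proved by an explicit higher-order expansion; as written, this step is missing.
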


\begin{proof}
First, let us prove the case for $\xi=\lambda$. 
For a general finite-dimensional system, in the high-temperature limit, the state of the total system $\rho_{AB,\lambda}$ can be written as
\begin{align*}
    \rho_{AB,\lambda}=\frac{1}{d_{AB}}\Big(\openone_{AB}-\frac{1}{T}\Big(H_\lambda-\frac{\text{Tr}[H_\lambda]}{d_{AB}}\Big)\Big)+O(T^{-2}),
\end{align*}
where $d_{AB}$ is the dimension of the system. The reduced state of subsystem $A$ is given by $\rho_{A,\lambda}=\text{Tr}_B[\rho_{AB,\lambda}]$, and within the same approximation we have $\rho_{A,\lambda}\propto \openone_A-\frac{1}{T}\big(H_{A,\lambda}+\Omega_{A,\lambda}\big)+O(T^{-2})$, where $\Omega_{A,\lambda}=\text{const}+\frac{1}{d_B}\sum_{k}\langle E_k^{(B)}|H_{AB,\lambda}|E_k^{(B)}\rangle$, which is \textit{independent} of temperature. In the high-temperature limit, $\rho_{A,\lambda}$ can be approximated by a Gibbs state $\rho_{A,\lambda}\simeq \mathcal{Z}_{A,\lambda}^{-1}e^{-H_{A,\lambda}^{\text{eff}}/T}$ with the effective Hamiltonian $H_{A,\lambda}^{\text{eff}}=H_{A,\lambda}+\Omega_{A,\lambda}$ and the normalization factor $\mathcal{Z}_{A,\lambda}=\text{Tr}[e^{-H_{A,\lambda}^{\text{eff}}/T}]$. Then, the local QFI follows Eq.~(\ref{eq:entropy2}), i.e.,
\begin{equation}
    \partial_T\Big(T\mathcal{F}_A(\lambda;T)\Big)=\partial_\lambda^2S_A(\lambda;T)+O(T^{-3}).
    \label{eq:A}
\end{equation}

Suppose that projectors $\tilde{\Pi}_j^A$ are the local optimal projection measurements for estimating $\lambda$ from state $\rho_{A,\lambda}$. Then, the conditional state $\rho_{B|\tilde{\Pi}_j^A,\lambda}$ after measuring subsystem $A$ can also be approximated as a Gibbs state in the high-temperature limit with the effective Hamiltonian 
$H_{B|\tilde{\Pi}_j^A,\lambda}=H_{B,\lambda}+\Omega_{B|\tilde{\Pi}_j^A,\lambda}$, where $\Omega_{B|\tilde{\Pi}_j^A,\lambda}=\text{const}+\text{Tr}[H_{AB,\lambda}\tilde{\Pi}_j^A]$. Then, the local QFI obeys Lemma~\ref{lemma1}, i.e., 
\begin{equation}
    \partial_T\Big(T\mathcal{F}_{B|\tilde{\Pi}_j^A}(\lambda;T)\Big)=\partial_\lambda^2S_{B|\tilde{\Pi}_j^A}(\lambda;T)+O(T^{-3}).
    \label{eq:B}
\end{equation}
Let us select $\tilde{\Pi}_{j*}^{A}$ such that $\sum_j p_{j*}(\lambda;T)S_{B|\tilde{\Pi}_{j*}^{A}}(\lambda; T)=\min_{\tilde{\Pi}_j^A}\sum_j p_{j}(\lambda;T)S_{B|\tilde{\Pi}_j^{A}}(\lambda; T)$. Then, 
\begin{align*}
\begin{split}
    \partial_\lambda^2\tilde{D}_{A\to B}(\lambda;T)=&\Big(\partial_\lambda^2S_A+\sum_{j*} p_{j*}\partial_\lambda^2S_{B|\tilde{\Pi}_{j*}^{A}}-S_{AB}\Big)\\
&+\sum_{j*}\Big(\partial_\lambda^2 p_{j*} S_{B|\tilde{\Pi}_{j*}^{A}}+2\partial_\lambda p_{j*}\partial_\lambda S_{B|\tilde{\Pi}_{j*}^{A}}\Big).
\end{split}
\end{align*}
From Eqs.~(\ref{eq:entropy2}), ~(\ref{eq:A}), and (\ref{eq:B}), we can obtain
\begin{align*}
\begin{split}
    -\partial_\lambda^2\tilde{D}_{A\to B}(\lambda;T)=&\partial_T\Big(T\Delta\mathcal{F}(\lambda;T)\Big)\\
    &-\sum_{j*}\Big(\partial_\lambda^2 p_{j*} S_{B|\tilde{\Pi}_{j*}^{A}}+2\partial_\lambda p_{j*}\partial_\lambda S_{B|\tilde{\Pi}_{j*}^{A}}\Big)
    \end{split}
\end{align*}

In the high-temperature limit, the entropy has the order of
$S_{B|\tilde{\Pi}_{j*}^A}(\lambda;T)=\ln (d_B)+O(T^{-2})$ and the measurement probability is 
\begin{equation}
    p_{j*}(\lambda;T)=\text{Tr}[(\tilde{\Pi}_{j*}^{A}\otimes\openone_B)\rho_{AB,\lambda}(\tilde{\Pi}_{j*}^{A\dagger}\otimes\openone_B)]=\frac{1}{d_A}+O(T^{-1}). 
\label{eq:probability}
\end{equation}

In the high-temperature limit, we have
\begin{align*}
    \partial_\lambda^2S_{B|\tilde{\Pi}_{j*}^A}(\lambda;T)&=O(T^{-2}).
\end{align*}
By using the fact that $\sum_{j*}p_{j*}(\lambda;T)=1$, we can write
\begin{align*}
\begin{split}
    \sum_{j*}\partial_\lambda^2p_{j*}(\lambda;T)S_{B|\tilde{\Pi}_{j*}^A}(\lambda;T)&=O(T^{-1})O(T^{-2})=O(T^{-3})\\
    \sum_{j*}\partial_\lambda p_{j*}\partial_\lambda S_{B|\tilde{\Pi}_{j*}^{A}}&=O(T^{-1})O(T^{-2})=O(T^{-3}).
    \end{split}
\end{align*}
Therefore, we can write
\begin{align*}
    -\partial_\lambda^2\tilde{D}_{A\to B}(\lambda;T)=\partial_T\Big(T\Delta\mathcal{F}(\lambda;T)\Big)+O(T^{-3}). 
\end{align*}

Second, for thermometry, from Lemma~\ref{lemma2} and Definition~\ref{definition}, the optimal measurement basis is the diagonal basis of $\rho_{A,T}$. Therefore, discord for local metrology $\tilde{D}_{A\to B}(T)$ becomes diagonal discord $\mathcal{D}_{A\to B}(T)$.  From our previous result in Ref.~\cite{Sone18a} since we have already known that
\begin{align*}
-\frac{1}{T}\partial_T\mathcal{D}_{A\to B}(T)=\Delta\mathcal{F}(T)+O(T^{-5}),
\end{align*}
we can obtain
\begin{align*}
    -\partial_T^2\mathcal{D}_{A\to B}(T)=\partial_T\Big(T\Delta\mathcal{F}(T)\Big)+O(T^{-5}),
\end{align*}

\end{proof}

Therefore, for any parameter $\xi$, in the high-temperature limit, we can approximately write 
\begin{equation}
\partial_\xi^2\tilde{D}_{A\to B}(\xi;T)\simeq -\partial_T\big(T\Delta\mathcal{F}(\xi;T)\big),
\label{approx}
\end{equation}
which demonstrates that $\partial_T\big(T\Delta\mathcal{F}(\lambda;T)\big)$ is the curvature of $\tilde{D}_{A\to B}$. 
{Even if the curvature of the discord for local metrology is not directly related to the amount of nonclassical correlations, Eq.~(\ref{approx}) still describes the role of  nonclassical correlations  in the greedy local measurement scheme in the LOCC regime.}
{Although we derived Theorem~\ref{theorem} for a bipartite system,  the results in the high-temperature limit can be extended to the case of multipartite systems (see Appendix~\ref{app:multipartite}).}

When the parameter $\lambda$ is linearly coupled in the Hamiltonian, discord for local metrology becomes diagonal discord. From Theorem~\ref{theorem} and Lemma~\ref{lemma2}, we can obtain the following corollary: 
\begin{corollary}
\label{corollary3}
Consider a finite-dimensional system in a Gibbs state at temperature $T$ with its Hamiltonian parametrized by a temperature-independent parameter $\lambda$. When $\lambda$ is linearly coupled to the Hamiltonian $H_\lambda$, i.e., $\partial_\lambda^2 H_\lambda=0$, we have
\begin{equation}
    \partial_\lambda^2\mathcal{D}_{A\to B}(\lambda;T)=-\partial_T\big(T\Delta\mathcal{F}(\lambda;T)\big)+O(T^{-3}),
    \label{eq:DDFisher}
\end{equation}
where $\mathcal{D}_{A\to B}(\lambda;T)$ is the diagonal discord. 
\end{corollary}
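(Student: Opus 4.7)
The plan is to derive Corollary~\ref{corollary3} as an essentially immediate consequence of Theorem~\ref{theorem} specialized to $\xi=\lambda$ (so that $\alpha_\lambda=3$), combined with Lemma~\ref{lemma2}. The only substantive step is to argue that, under the linear-coupling hypothesis $\partial_\lambda^2 H_\lambda=0$, the local optimal projection basis on subsystem $A$ coincides with the eigenbasis of $\rho_{A,\lambda}$ in the high-temperature limit, so that the discord for local metrology $\tilde{D}_{A\to B}(\lambda;T)$ in Definition~\ref{definition} collapses to the diagonal discord $\mathcal{D}_{A\to B}(\lambda;T)$; the corollary is then just a substitution in Eq.~(\ref{eq:mainresult}).

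First I would propagate the linearity hypothesis down to subsystem $A$. Since the partial trace over $B$ is linear in the operator it acts on, $\partial_\lambda^2 H_\lambda=0$ immediately gives $\partial_\lambda^2 H_{A,\lambda}=0$, and the shift $\Omega_{A,\lambda}$ appearing in the proof of Theorem~\ref{theorem}, being defined through traces against $H_\lambda$, inherits the same property. Hence the effective Hamiltonian $H_{A,\lambda}^{\text{eff}}=H_{A,\lambda}+\Omega_{A,\lambda}$, whose Gibbs state approximates $\rho_{A,\lambda}$ to leading order in $1/T$, depends only linearly on $\lambda$.

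Next I would apply Lemma~\ref{lemma2} to the effective Gibbs state for $\rho_{A,\lambda}$: the optimal projection measurement on $A$ for estimating $\lambda$ is the projection onto the eigenbasis of $H_{A,\lambda}^{\text{eff}}$. Because a Gibbs state is diagonal in the eigenbasis of its generating Hamiltonian, this is precisely the eigenbasis of $\rho_{A,\lambda}$, i.e., $\rho_{A,\lambda}=\sum_j r_j\,\tilde{\Pi}_j^A$. The minimization in Definition~\ref{definition} therefore reduces to evaluation on the diagonal basis of $\rho_{A,\lambda}$, yielding $\tilde{D}_{A\to B}(\lambda;T)=\mathcal{D}_{A\to B}(\lambda;T)$ to leading order in $1/T$. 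Substituting into Eq.~(\ref{eq:mainresult}) with $\xi=\lambda$ and rearranging the sign gives Eq.~(\ref{eq:DDFisher}).

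The main obstacle—really the only thing to check—is that the error incurred in identifying $\{\tilde{\Pi}_j^A\}$ with the eigenbasis of $\rho_{A,\lambda}$ is no worse than the $O(T^{-3})$ remainder already present in Theorem~\ref{theorem}. Since $H_{A,\lambda}^{\text{eff}}$ is \emph{exactly} linear in $\lambda$ once the leading high-temperature expansion of $\rho_{AB,\lambda}$ has been carried out, Lemma~\ref{lemma2} applies without additional approximation at this order; any discrepancy between the true optimal basis and the diagonal basis of $\rho_{A,\lambda}$ is pushed into the $O(T^{-2})$ correction to the Gibbs form of $\rho_{A,\lambda}$, which contributes at most $O(T^{-3})$ to the curvatures of $\tilde{D}_{A\to B}$ and $\Delta\mathcal{F}$. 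This is consistent with the remainder stated in Theorem~\ref{theorem}, so the corollary follows.
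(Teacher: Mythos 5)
Your proposal is correct and follows essentially the same route as the paper: the paper obtains Corollary~\ref{corollary3} by combining Theorem~\ref{theorem} (with $\xi=\lambda$, $\alpha_\lambda=3$) and Lemma~\ref{lemma2}, noting—just as you argue—that for a linearly coupled $\lambda$ the optimal local projectors are the eigenbasis of $\rho_{A,\lambda}$ (via the effective Gibbs form $H_{A,\lambda}^{\text{eff}}=H_{A,\lambda}+\Omega_{A,\lambda}$), so $\tilde{D}_{A\to B}$ reduces to the diagonal discord $\mathcal{D}_{A\to B}$. Your explicit check that linearity in $\lambda$ passes to $H_{A,\lambda}^{\text{eff}}$ through the partial trace, and that the basis identification costs no more than the existing $O(T^{-3})$ remainder, merely makes explicit what the paper leaves implicit.
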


In addition, let us note the case of estimating a parameter linearly coupled to the single-body term. For this case, we can obtain the following corollary (see Appendix~\ref{app:corollary4} for proof):
\begin{corollary}
\label{corollary4}
For a finite-dimensional system in a Gibbs state at temperature $T$, when $\lambda$ is a parameter linearly coupled to the single-body term as
\begin{align*}
    H_\lambda=\lambda H_A+\lambda H_B +H_{AB},
\end{align*}
where $H_A$ and $H_B$ are the system Hamiltonians and $H_{AB}$ is the interaction Hamiltonian,  then, we have
\begin{equation}
\begin{split}
    -\partial_\lambda^2\mathcal{D}_{A\to B}(\lambda;T)&=O(T^{-3})\\
    \partial_T\Big(T\Delta\mathcal{F}(\lambda;T)\Big)&=O(T^{-3}). 
\end{split}
\label{eq:localfieldmagnitude}
\end{equation}
\end{corollary}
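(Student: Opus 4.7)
The plan is to leverage Corollary~\ref{corollary3}, which, since $\lambda$ is linearly coupled with $\partial_\lambda^2 H_\lambda=0$, already asserts that
\begin{equation*}
\partial_\lambda^2\mathcal{D}_{A\to B}(\lambda;T)=-\partial_T\big(T\Delta\mathcal{F}(\lambda;T)\big)+O(T^{-3}).
\end{equation*}
Consequently, the two statements in Eq.~(\ref{eq:localfieldmagnitude}) are equivalent modulo $O(T^{-3})$, and it suffices to prove just one of them. I would prove that $\Delta\mathcal{F}(\lambda;T)=O(T^{-3})$, since this implies $\partial_T(T\Delta\mathcal{F}(\lambda;T))=O(T^{-3})$ and, via Corollary~\ref{corollary3}, transfers immediately to $-\partial_\lambda^2\mathcal{D}_{A\to B}(\lambda;T)=O(T^{-3})$.

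Because $G_\lambda=\partial_\lambda H_\lambda=H_A+H_B$ with $\partial_\lambda G_\lambda=0$, Eq.~(\ref{eq:fisherG}) gives $\mathcal{F}(\lambda;T)=(\delta G_\lambda)^2/T^2+O(T^{-3})$ for any Gibbs state associated with $H_\lambda$. First I would evaluate the global QFI on $\rho_{AB,\lambda}=\openone_{AB}/d_{AB}+O(T^{-1})$. Expanding $(\delta(H_A+H_B))^2=(\delta H_A)^2+(\delta H_B)^2+2\,\text{Cov}(H_A,H_B)$ and noting that the covariance vanishes at zeroth order in $1/T$, because $H_A\otimes\openone_B$ and $\openone_A\otimes H_B$ are uncorrelated on a maximally mixed state, one obtains $\mathcal{F}_{AB}(\lambda;T)=[(\delta H_A)^2_{\openone_A/d_A}+(\delta H_B)^2_{\openone_B/d_B}]/T^2+O(T^{-3})$.

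Next I would apply the same formula to the reduced state $\rho_{A,\lambda}$ and to each conditional state $\rho_{B|\tilde{\Pi}_j^A,\lambda}$, both of which reduce to maximally mixed states at zeroth order in $1/T$ (the former from the expansion used in the proof of Theorem~\ref{theorem}, the latter regardless of the outcome $j$ since $H_{AB}$ does not depend on $\lambda$). This yields $\mathcal{F}_A(\lambda;T)=(\delta H_A)^2_{\openone_A/d_A}/T^2+O(T^{-3})$ and $\mathcal{F}_{B|\tilde{\Pi}_j^A}(\lambda;T)=(\delta H_B)^2_{\openone_B/d_B}/T^2+O(T^{-3})$. Averaging over $j$ with $p_j=1/d_A+O(T^{-1})$ from Eq.~(\ref{eq:probability}) preserves the leading coefficient, so $\mathcal{F}_{A\to B}(\lambda;T)$ matches $\mathcal{F}_{AB}(\lambda;T)$ at order $1/T^2$, giving $\Delta\mathcal{F}(\lambda;T)=O(T^{-3})$.

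The main obstacle is to carefully justify that the $O(T^{-1})$ corrections to each of the three states (global, marginal, and conditional) feed into their respective QFIs only at order $O(T^{-3})$, and that the outcome-dependent choice of the local optimal basis $\{\tilde{\Pi}_j^A\}$ does not spoil the factorization of the conditional variance at leading order. Since the $1/T$ corrections to each state enter the variance formulas multiplied by $1/T^2$ from Eq.~(\ref{eq:fisherG}), this bookkeeping is essentially automatic, which is what makes the result clean for the single-body linear coupling.
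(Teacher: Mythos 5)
Your proposal is correct and essentially mirrors the paper's own proof in Appendix~\ref{app:corollary4}: both arguments show that the global QFI and the LOCC QFI coincide at order $T^{-2}$ (so $\Delta\mathcal{F}(\lambda;T)=O(T^{-3})$) by computing the global, local, and conditional QFIs to leading order in the high-temperature expansion, and then transfer the result to the discord statement via Corollary~\ref{corollary3}. The only difference is presentational: the paper keeps the leading terms in susceptibility form, $-\tfrac{1}{T}\partial_\lambda\langle H_A\rangle$ and $-\tfrac{1}{T}\partial_\lambda\langle H_B\rangle$, using the effective Hamiltonians of the reduced and conditional states, whereas you evaluate the same leading contributions as variances of $H_A$ and $H_B$ on maximally mixed states via Eq.~(\ref{eq:fisherG}), which is an equivalent computation.
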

To this order, the local measurements are optimal. Here, note that the leading term that Theorem~\ref{theorem} cares about is $O(T^{-2})$, and this  is $0$ in this case.

In the following section, we show some examples that verify  Theorem~\ref{theorem}, Corollary~\ref{corollary3}, and \ref{corollary4}.

\section{Examples}
\label{sec:example}
In this section, we verify the relation in Eqs.~(\ref{eq:mainresult}), (\ref{eq:temperature}), (\ref{eq:DDFisher}), and (\ref{eq:localfieldmagnitude}) by providing several examples of two-qubit Heisenberg interaction, whose Hamiltonian can be written as 
\begin{align*}
H=\frac{B_1}{2}Z_A+\frac{B_2}{2}Z_B+\frac{J_x}{2}X_AX_B+\frac{J_y}{2}Y_AY_B+\frac{J_z}{2}Z_AZ_B,
\end{align*}
where $X_j, Y_j$, and $Z_j$ $(j=A,~B)$ are the Pauli matrices acting on the $j$th spin.  

\subsection{Thermometry}
\label{sec:thermometry}
First, let us discuss the case of thermometry. From our recent result~\cite{Sone18a}, we have:
\begin{align*}
\begin{split}
&\Delta\mathcal{F}(T)=\frac{J_x^2+J_y^2}{4T^4}+O(T^{-5})\\
&-\frac{1}{T}\partial_T\mathcal{D}_{A\to B}(T)=\frac{J_x^2+J_y^2}{4T^4}+O(T^{-5}),
\end{split}    
\end{align*}
which directly yields
\begin{align*}
\begin{split}
    \partial_T\big(T \Delta\mathcal{F}(T)\big)&=-\frac{3(J_x^2+J_y^2)}{4T^{4}}+O(T^{-5})\\
    -\partial_T^2\mathcal{D}_{A\to B}(T)&=-\frac{3(J_x^2+J_y^2)}{4T^{4}}+O(T^{-5}).
\end{split}
\end{align*}
Therefore, Eq.~(\ref{eq:temperature}) is valid. 

\subsection{Coupling strength}
\label{sec:coupling}
Next, let us consider the case of estimating the coupling strength $J$ when $J_x=J_y=J$.
Then, we have
\begin{align*}
\begin{split}
    &\Delta\mathcal{F}(J; T)=\frac{1}{2T^2}+O(T^{-3})\\
    &\mathcal{D}_{A\to B}(J; T)=\frac{J^2}{4T^2}+O(T^{-3}),
    \end{split}
\end{align*}
which directly yields
\begin{align*}
\begin{split}
    \partial_T\big(T \Delta\mathcal{F}(J; T)\big)&=-\frac{1}{2T^2}+O(T^{-3})\\
    -\partial_J^2\mathcal{D}_{A\to B}(J; T)&=-\frac{1}{2T^2}+O(T^{-3}),
\end{split}
\end{align*}
Therefore, Eq.~(\ref{eq:DDFisher}) is valid. 

\subsection{Magnetometry}
\label{sec:magnetic}
Finally, let us consider magnetometry, which demonstrates Eq.~(\ref{eq:localfieldmagnitude}). We consider the case of $B_1=B_2=B$, where $B$ is the parameter to be estimated. In this case, we can find that
\begin{align*}
    \begin{split}
    \partial_T \big(T \Delta\mathcal{F}(B; T)\big)&=-\frac{(J_x-J_y)^2}{8T^4}+O(T^{-5})\\
    -\partial_B^2\mathcal{D}_{A\to B}(B; T)&=-\frac{J_x^2+J_xJ_y+J_y^2}{24T^4}+O(T^{-5}).
    \end{split}
\end{align*}
From Eq.~(\ref{eq:localfieldmagnitude}), the leading term should be $O(T^{-3})$; therefore, we can say that Eq.~(\ref{eq:localfieldmagnitude}) is valid, but the term to the corresponding order $O(T^{-3})$ is $0$.

\section{Conclusion}
\label{sec:conc}

In conclusion, we introduced a metric for nonclassical correlations, the discord for local metrology, which is defined as a quantum discord in the greedy local measurement scheme, and we derived a relation between the discord for local metrology and {the difference between the QFI} of the global optimal scheme and the greedy local measurement scheme in the high-temperature limit. We demonstrated that {the curvature of} the discord for local metrology quantifies the precision loss in the estimation of a general parameter due to availability of local measurements only (Theorem~\ref{theorem}). 
This also indicates that variations in  nonclassical correlations at thermal equilibrium,  quantified by discord for local metrology, {are related to the ability of  the greedy local measurement scheme to achieve the ultimate estimation precision limit, quantified by the global QFI.} 
We also showed that discord for local metrology coincides with diagonal discord when one estimates a linear coupling parameter (Corollaries~\ref{corollary3} and \ref{corollary4}). 

Although we focused on finite-dimensional systems in the high-temperature limit, it would be interesting to extend the relation between the discord for local metrology and QFI for more general Gibbs states, especially in the low-temperature limit where one could search for connections to phase transition phenomena, {or for infinite-dimensional systems, such as bosonic gases~\cite{Ugo13, Ugo15}. } 

{The relation between the curvature of the discord for local metrology and the difference in the QFI} explicitly demonstrates the role of nonclassical correlations in quantum metrology based on the original definition of quantum discord. 
This provides insight on the role of nonclassicality in quantum metrology and motivates further exploration in more general settings, {which can potentially inspire experimentalists to design measurement and control protocols to utilize quantum discord as a resource to achieve precise sensing and imaging, e.g., in the context of room-temperature nuclear magnetic resonance or bioimaging with defect spins~\cite{DeVience15,Sushkov14l, Hui10, Yeung10d}. }

\acknowledgments
This work was supported, in part, by the U.S. Army Research Office through Grants No. W911NF-11-1-0400 and Bo.  W911NF-15-1-0548 and by the NSF Grant No. PHY0551153.
{A.S. acknowledges a Thomas G. Stockham Jr. Fellowship.}  
Q.Z. acknowleges the U.S. Department of Energy through Grant No. PH-COMPHEP-KA24 and the Claude E. Shannon Research Assistantship. 
We thank B. Yadin, K. Modi, and R. Takagi for helpful discussions.

\appendix

\section{Proof of Lemma~\ref{lemma1}}
\label{app:lemma1}
First, let us prove the case of $\xi=\lambda$.

Let $\epsilon$ be an error in our estimation. Then, the Hamiltonian with the error becomes
\begin{align*}
    H_{\lambda+\epsilon}=H_\lambda+\epsilon G_\lambda+O(\epsilon^2),
\end{align*}
where 
\begin{align*}
G_\lambda=\partial_\lambda H_\lambda.
\end{align*}
The fidelity between $\rho_\lambda$ and $\rho_{\lambda+\epsilon}$ is defined as
\begin{align*}
    \mathbb{F}[\rho_\lambda, \rho_{\lambda+\epsilon}]
    =\Big(\text{Tr}\Big[\sqrt{\rho_{\lambda}^{1/2}\rho_{\lambda+\epsilon}\rho_\lambda^{1/2}}~\Big]\Big)^2.
\end{align*}
Since
\begin{align*}
    e^{-\frac{H_\lambda}{2T}}e^{-(H_\lambda+\epsilon G_\lambda)/T}e^{-\frac{H_\lambda}{2T}}=e^{-(2H_\lambda+\epsilon G_\lambda)/T+O(T^{-3})}
\end{align*}
we can write
\begin{align*}
\begin{split}
    \mathbb{F}[\rho_\lambda,\rho_{\lambda+\epsilon}]&=\frac{1}{\mathcal{Z}_\lambda\mathcal{Z}_{\lambda+\epsilon}}\Big(\text{Tr}[e^{-(H_\lambda+\frac{\epsilon}{2}G_\lambda)/T+O(T^{-3})}]\Big)^2\\
    &=\frac{1}{\mathcal{Z}_\lambda\mathcal{Z}_{\lambda+\epsilon}}\Big(\text{Tr}[e^{-(H_\lambda+\frac{\epsilon}{2}G_\lambda)/T}]\Big)^2+O(T^{-3}).
\end{split}
\end{align*}

In the high-temperature limit, the fidelity between $\rho_\lambda$ and $\rho_{\lambda+\epsilon}$ becomes
\begin{align*}
    \mathbb{F}[\rho_\lambda,\rho_{\lambda+\epsilon}]= \frac{\mathcal{Z}_{\lambda+\frac{\epsilon}{2}}^2}{\mathcal{Z}_\lambda\mathcal{Z}_{\lambda+\epsilon}}+O(T^{-3}),
\end{align*}
where $\mathcal{Z}_{\lambda+\frac{\epsilon}{2}}=\text{Tr}[e^{- H_{\lambda+\frac{\epsilon}{2}}/T}]$, and from the definition of  the QFI, we can obtain
\begin{align*}
\mathcal{F}(\lambda;T)=\frac{\mathcal{Z}_\lambda\partial_\lambda^2\mathcal{Z}_\lambda-(\partial_\lambda\mathcal{Z}_\lambda)^2}{\mathcal{Z}_\lambda^2}+O(T^{-3})
\end{align*}
Here, for the Gibbs state, $\langle G_\lambda\rangle=\text{Tr}[G_\lambda\rho_\lambda]$ is always
\begin{align*}
    \langle G_\lambda\rangle=-T\partial_\lambda \ln\mathcal{Z}_\lambda.
\end{align*}
Then, the susceptibility with respect to a temperature-independent parameter $\lambda$ can be defined  as 
\begin{align*}
    \chi(\lambda;T)=-\partial_\lambda\langle G_\lambda\rangle. 
\end{align*}
so that we have 
\begin{equation}
    \mathcal{F}(\lambda;T)=\frac{\chi(\lambda;T)}{T}+O(T^{-3}).
    \label{eq:QFI}
\end{equation}
Since the entropy of the bipartite system, $S(\lambda;T)=-\text{Tr}[\rho_\lambda\ln\rho_\lambda]$,  
satisfies the following relation
\begin{equation}
\partial_T\langle G_\lambda\rangle=-\partial_\lambda S(\lambda;T),
\label{eq:entropy}
\end{equation}
from Eqs.~(\ref{eq:QFI}) and Eq.~(\ref{eq:entropy}), we can obtain
\begin{align*}
\partial_T\big(T\mathcal{F}(\lambda;T)\big)=\partial_\lambda^2 S(\lambda;T)+O(T^{-3}). 
\end{align*}

Second, for the thermometry case $\xi=T$, the global QFI is $\mathcal{F}(T)=C(T)/T^2$~\cite{correa2015individual} for finite temperature, where $C(T)$ is the heat capacity so that $C(T)=T\partial_TS(T)$. Therefore, we can obtain an exact relation
\begin{align*}
\partial_T^2S(T)=\partial_T\big(T\mathcal{F}(T)\big).
\end{align*} 

Therefore, Lemma~\ref{lemma1} is valid.

\section{Proof of Lemma~\ref{lemma2}}
\label{app:lemma2}
First, let us prove the case $\xi=\lambda$. 
When $\partial_\lambda G_\lambda=0$, the QFI becomes
\begin{equation}
    \mathcal{F}(\lambda,T)=\frac{(\delta G_\lambda)^2}{T^2}+O(T^{-3}).
    \label{eq:linear}
\end{equation}
Let $E_k(\lambda)$ be the eigenvalues of the Hamiltonian $H_\lambda$. Then, $H_\lambda$ can be diagonalized as $H_\lambda=P_\lambda K_\lambda P_\lambda^\dagger$,
where $P_\lambda$ is an unitary operator, $P_\lambda^{\dagger}P_{\lambda}=P_\lambda P_\lambda^{\dagger}=\openone$
and $K_\lambda=\text{diag}(E_1(\lambda),E_2(\lambda),\cdots, E_{d}(\lambda))=\sum_{k=1}^{d}E_k(\lambda)|k\rangle\langle k|$
and $|k\rangle$'s form a complete basis independent of $\lambda$,  
and $d$ is the dimension of the system. Thus, 
\begin{align*}
\partial_\lambda K_\lambda=\sum_{k=1}^d\partial_\lambda E_k(\lambda)|k\rangle\langle k|.
\end{align*}
Then, the Gibbs state becomes 
\begin{align*}
    \rho_\lambda=\frac{1}{\mathcal{Z}_\lambda}P_\lambda e^{-K_\lambda/T}P_\lambda^\dagger=\frac{1}{\mathcal{Z}_\lambda}\sum_{k=1}^d e^{-E_k(\lambda)/T}P_\lambda|k\rangle\langle k|P_\lambda^\dagger.
\end{align*}

Let us calculate the expectation value of $G_\lambda=\partial_\lambda H_\lambda$. Since 
\begin{align*}
G_\lambda=\partial_\lambda P_\lambda K_\lambda P_{\lambda}^{\dagger}+P_\lambda\partial_\lambda K_\lambda P_\lambda^\dagger+P_\lambda K_\lambda\partial_\lambda(P_\lambda^\dagger),
\end{align*}
we have
\begin{align*}
\begin{split}
\langle G_\lambda\rangle=&\text{Tr}[\rho_\lambda G_\lambda]\\
=&\frac{1}{\mathcal{Z}_\lambda}\text{Tr}\Big[\Big(P_\lambda e^{-K_\lambda/T}P_\lambda^\dagger\Big)\Big(\partial_\lambda P_\lambda K_\lambda P_{\lambda}^\dagger\\
&+
P_\lambda\partial_\lambda K_\lambda P_\lambda^\dagger+P_\lambda K_\lambda\partial_\lambda(P_\lambda^\dagger)\Big)\Big]\\
=&\text{Tr}\Big[\frac{e^{-K_\lambda/T}}{\mathcal{Z}_\lambda}\partial_\lambda K_\lambda\Big]+\frac{1}{\mathcal{Z}_\lambda}\text{Tr}\Big[K_\lambda e^{-K_\lambda /T}(P_\lambda^\dagger\partial_\lambda P_\lambda)\\
&+
(\partial_\lambda(P_\lambda^\dagger)P_\lambda) e^{-K_\lambda/T}K_\lambda \Big]\\
=&\text{Tr}\Big[\frac{e^{-K_\lambda/T}}{\mathcal{Z}_\lambda}\partial_\lambda K_\lambda\Big]+\frac{1}{\mathcal{Z}_\lambda}\text{Tr}\Big[e^{-K_\lambda/T}K_\lambda \partial_\lambda(P_\lambda^\dagger P_\lambda)\Big]\\
=&\text{Tr}\Big[\frac{e^{-K_\lambda/T}}{\mathcal{Z}_\lambda}\partial_\lambda K_\lambda\Big]=\text{Tr}[\rho_\lambda P_\lambda\partial_\lambda K_\lambda P_\lambda^\dagger],
\end{split} 
\end{align*}
where we used the cyclic property of trace operation and the fact that $[e^{-K_\lambda/T},K_\lambda]=0$.
Therefore, 
\begin{align*}
\langle G_\lambda\rangle=\langle P_\lambda\partial_\lambda K_\lambda P_\lambda^\dagger \rangle,
\end{align*}
and $P_\lambda\partial_\lambda K_\lambda P_\lambda^\dagger$ has same diagonal basis of $\rho_\lambda$, which is $\{P_\lambda|k\rangle\langle k|P_\lambda^\dagger\}_{k=1}^{d}$. This means that the optimal measurement for estimating the linear coupling parameter is the projection measurement to the diagonal basis of $\rho_\lambda$. 

Second, for the case of $\xi=T$, the QFI is given as
\begin{align*}
    \mathcal{F}(T)=\frac{C(T)}{T^2},
\end{align*}
where $C(T)$ is the heat capacity~\cite{correa2015individual}. Because of $C(T)=\partial_T\langle H_\lambda\rangle=(\delta H_\lambda)^2/T^2$, the temperature variance $(\delta T)^2$ becomes
\begin{align*}
    (\delta T)^2=\frac{(\delta H_\lambda)^2}{(\partial_T\langle H_\lambda \rangle)^2}=\frac{T^2}{C(T)}=\frac{1}{\mathcal{F}(T)}.
\end{align*}
Therefore, for  thermometry, the projection measurements on diagonal basis are optimal.

\section{Proof of Corollary~\ref{corollary4}}
\label{app:corollary4}

Let us consider the following Hamiltonian:
\begin{align*}
    H_\lambda=\lambda H_A+\lambda H_B+H_{AB},
\end{align*}
where $H_A$ and $H_B$ are the system Hamiltonians, i.e., $[H_A,H_B]=0$ and $H_{AB}$ is the interaction Hamiltonian and generally $[H_A+H_B, H_{AB}]\neq 0$. Here, $\lambda$ is the parameter to be estimated. In this case, $H_{\lambda+\epsilon}=H_\lambda+\epsilon G_\lambda+O(\epsilon^2)$, where $G_\lambda=H_A+H_B$, which is independent of $\xi=\{\lambda,T\}$. Here, we just simply write $G_\lambda$ as $G$ in order to emphasize its independence of $\lambda$.

We  already know that for the Gibbs state, we have $\langle G\rangle=-T\partial_\lambda \ln\mathcal{Z}_\lambda$.
In this case, we can immediately obtain 
\begin{align*}
\langle G\rangle=\langle H_A\rangle+\langle H_B\rangle=O(T^{-1})
\end{align*}
because the entropy is $S_{AB}(\lambda; T)=\ln (d_{AB})+O(T^{-2})$ and the relation between the entropy and $\langle G\rangle$ is
\begin{align*}
\partial_T \langle G\rangle=-\partial_\lambda S_{AB}(\lambda;T)=O(T^{-2}).
\end{align*}
By defining a general susceptibility with respect to $\lambda$ as
\begin{align*}
\chi(\lambda;T)=-\partial_\lambda\langle G\rangle=O(T^{-1}),
\end{align*}
the QFI can be given as
\begin{equation}
\mathcal{F}_{AB}(\lambda; T)=-\frac{1}{T}\partial_\lambda\langle H_A\rangle-\frac{1}{T}\partial_\lambda\langle H_B\rangle=O(T^{-2}).
\label{eq:localfieldFAB}
\end{equation}

Now, let us consider the subsystem A. The effective Hamiltonian $H_{A,\lambda}^{\text{eff}}$ can be written as
$H_{A,\lambda}^{\text{eff}}=\lambda (H_A+\text{const})+\Omega_A$.
Therefore, 
\begin{align*}
    \mathcal{F}_A(\lambda;T)=-\frac{1}{T}\partial_\lambda\langle H_A\rangle+O(T^{-3}).
\end{align*}
Similarly, for $\rho_{B|\tilde{\Pi}_{j*}^A,\xi}$, we have
\begin{align*}
    \mathcal{F}_{B|\tilde{\Pi}_{j*}^A}(\lambda;T)=-\frac{1}{T}\partial_\lambda\langle H_B\rangle+O(T^{-3}).
\end{align*}
Therefore, by using Eq.~(\ref{eq:probability}), we have 
\begin{align*}
    \mathcal{F}_{A\to B}(\lambda;T)=-\frac{1}{T}\Big(\partial_\lambda\langle H_A\rangle+\partial_\lambda\langle H_B\rangle\Big)+O(T^{-3})
    \end{align*}
From Eqs.~(\ref{eq:localfieldFAB}) and Eq.~(\ref{eq:DDFisher}), we can obtain
\begin{align*}
\begin{split}
    -\partial_\lambda^2\mathcal{D}_{A\to B}(\lambda;T)&=O(T^{-3})\\
    \partial_T\Big(T\Delta\mathcal{F}(\lambda;T)\Big)&=O(T^{-3}). 
\end{split}
\end{align*}

\section{{Generalization to the multipartite case}}
\label{app:multipartite}
Let us consider a finite-dimensional system composed of $N$ subsystems indexed by integers $1\le k\le N$. In the multipartite case, each subsystem is measured with local optimal measurement sequentially, and we demonstrate that the difference in global QFI and LOCC QFI can be quantified via the curvature of the discord for local metrology in the high-temperature limit, in parallel to Ref.~\cite{Sone18a}. 

We denote the order of measurement in a greedy local measurement scheme by $\sigma_{1:N}\equiv (\sigma_1,\sigma_2,\cdots,\sigma_N)$, where $\sigma_k=\{1,2,\cdots,N\}$. Let us write $\mathcal{H}(\sigma_k)$ as the Hilbert space of the system on which we perform local optimal measurement $\tilde{\Pi}_{\sigma_k}$ and $\mathcal{H}(\sigma_{k+1:N})$ as the Hilbert space of the rest of system on which we perform the local optimal measurement $\tilde{\Pi}_{\sigma_{k+1:N}}$. Therefore, the total system can be decomposed sequentially into
\begin{align*}
    \begin{split}
        \mathcal{H}(\sigma_{1:N})&=\mathcal{H}(\sigma_1)\otimes\mathcal{H}(\sigma_{2:N})\\
        &=\mathcal{H}(\sigma_1)\otimes\mathcal{H}(\sigma_2)\otimes\mathcal{H}(\sigma_{3:N})\\
        &~\vdots\\
        &=\mathcal{H}(\sigma_1)\otimes\mathcal{H}(\sigma_2)\otimes\cdots\otimes\mathcal{H}(\sigma_{k})\otimes\mathcal{H}(\sigma_{k+1:N}),
    \end{split}
\end{align*}
where $2\le k\le N-1$. 

In the first step $(k=1)$, we first perform the local optimal measurement $\tilde{\Pi}_{\sigma_1}$.  Then conditioned on the measurement result of $\tilde{\Pi}_{\sigma_1}$, we perform the other local optimal measurement $\tilde{\Pi}_{\sigma_{2:N}}$ on the rest of system. Let us write the global QFI as $\mathcal{F}_{\sigma_{1:N}}$ and LOCC QFI as $\mathcal{F}_{\sigma_1\to\sigma_{2:N}}$. Then, in the high-temperature limit, from Eq.~(\ref{approx}), we have 
\begin{align*}
\partial_T\Big(T( \mathcal{F}_{\sigma_{1:N}}-\mathcal{F}_{\sigma_1\to\sigma_{2:N}})\Big)\simeq-\partial_\xi^2\tilde{D}_{\sigma_1\to\sigma_{2:N}}.
\end{align*}
For the $2\le k \le N-1$ steps, the measurement $\tilde{\Pi}_{\sigma_k}$ is conditioned on the results of the previous sequence of local optimal measurements $\tilde{\Pi}_{1:k-1}\equiv (\tilde{\Pi}_{\sigma_{1}},\tilde{\Pi}_{\sigma_2},\cdots,\tilde{\Pi}_{\sigma_{k-1}})$. We treat the rest of system as a bipartite system composed of $\mathcal{H}(\sigma_k)$ and $\mathcal{H}(\sigma_{k+1:N})$. Then, from Eq.~(\ref{approx}), we have 
$\partial_T\Big(T(\mathcal{F}_{\sigma_{k:N}|\tilde{\Pi}_{\sigma_{1:k-1}}}-\mathcal{F}_{\sigma_{k}\to\sigma_{k+1:N}|\tilde{\Pi}_{\sigma_{1:k-1}}})\Big)\simeq-\partial_\xi^2\tilde{D}_{\sigma_{k}\to\sigma_{k+1:N}|\tilde{\Pi}_{\sigma_{1:k-1}}}$. Here, we have $\mathcal{F}_{\sigma_{k}\to\sigma_{k+1:N}|\tilde{\Pi}_{\sigma_{1:k-1}}}=\mathcal{F}_{\sigma_{k}|\tilde{\Pi}_{\sigma_{1:k-1}}}+\mathcal{F}_{\sigma_{k+1:N}|\tilde{\Pi}_{\sigma_{1:k}}}$.  

The unconditional QFI is given by the average over measurement outcome distribution $p(\tilde{\Pi}_{\sigma_{1:k-1}})$ as
\begin{align*}
\mathcal{F}_{\sigma_k\to\sigma_{k+1:N}|\sigma_{1:k-1}}\equiv
\sum_{\tilde{\Pi}_{\sigma_{1:k-1}}} p(\tilde{\Pi}_{\sigma_{1:k-1}})\mathcal{F}_{\sigma_k\to\sigma_{k+1:N}|\tilde{\Pi}_{\sigma_{1:k-1}}}.
\end{align*}
Then one can define an unconditional version of discord
\begin{align*}
    \tilde{D}_{\sigma_k\to\sigma_{k+1:N}|\sigma_{1:k-1}}=\sum_{\tilde{\Pi}_{\sigma_{1:k-1}}} p(\tilde{\Pi}_{\sigma_{1:k-1}})\tilde{D}_{\sigma_k\to\sigma_{k+1:N}|\tilde{\Pi}_{\sigma_{1:k-1}}},
\end{align*}
which is related to the average measurement precision difference,
\begin{align*}
\begin{split}
    \partial_T\Big(T(\mathcal{F}_{\sigma_{k:N}|\sigma_{1:k-1}}-&\mathcal{F}_{\sigma_{k}\to\sigma_{k+1}|\sigma_{1:k-1}})\Big)\\
    &\simeq-\partial_\xi^2\tilde{D}_{\sigma_k\to\sigma_{k+1:N}|\sigma_{1:k-1}},
    \end{split}
\end{align*}
where $\mathcal{F}_{\sigma_{k}\to\sigma_{k+1}|\sigma_{1:k-1}}=\mathcal{F}_{\sigma_{k}|\sigma_{1:k-1}}+\mathcal{F}_{\sigma_{k+1:N}|\sigma_{1:k}}$. Therefore, by adding the equation above from $k=1$ and $k=N$, the difference in the QFI can be written as
\begin{align*}
    \Delta\mathcal{F}_{\sigma_{1:N}}=\mathcal{F}_{\sigma_{1:N}}-\sum_{k=1}^N\mathcal{F}_{\sigma_{\sigma_k}|\sigma_{1:k-1}}
\end{align*}
so that we can obtain
\begin{equation}
   \partial_T(T\Delta\mathcal{F}_{\sigma_{1:N}})\simeq-\partial_\xi^2\tilde{D}_{\sigma_{1:N}},
   \label{eq:multipartite}
\end{equation}
where
\begin{align*}
  \tilde{D}_{\sigma_{1:N}}=\sum_{k=1}^N \tilde{D}_{\sigma_k\to\sigma_{k+1:N}|\sigma_{1:k-1}}.  
\end{align*}
Equation~(\ref{eq:multipartite}) is the generalization of Eq.~(\ref{approx}) for the multipartite case. 

\bibliographystyle{apsrev4-1}
\bibliography{Biblio}

\end{document}